\documentclass[11pt]{article}
\usepackage[margin=1in]{geometry}
\usepackage{amsmath,amssymb,amsthm,mathtools}
\usepackage{mathrsfs}
\usepackage{booktabs,array,longtable}
\usepackage[T1]{fontenc}
\usepackage{lmodern}
\usepackage{authblk}
\usepackage{quantikz}
\usepackage{xcolor}
\usepackage{subcaption}
\usepackage{hyperref}
\hypersetup{hidelinks,pdflang={en-US},
 pdftitle={Efficient Non-Uniform Quantum Hermite Transform through Adaptive Sampling},
 pdfauthor={Nitay Mayo; Aryeh Lev Zabokritskiy (Yohananov)}}
\newtheorem{theorem}{Theorem}
\newtheorem{lemma}[theorem]{Lemma}
\newtheorem{proposition}[theorem]{Proposition}

\theoremstyle{remark}
\newcommand{\id}{\mathbb I}
\newcommand{\hilbert}{\mathcal H}
\providecommand{\ket}[1]{}
\providecommand{\bra}[1]{}
\providecommand{\braket}[2]{}
\renewcommand{\ket}[1]{\lvert #1\rangle}
\renewcommand{\bra}[1]{\langle #1\rvert}
\renewcommand{\braket}[2]{\langle #1\mid #2\rangle}

\DeclareMathOperator{\diag}{diag}

\title{Efficient Non-Uniform Quantum Hermite Transform through Adaptive Sampling}
\author[1]{Nitay Mayo}
\author[2]{Aryeh Lev Zabokritskiy (Yohananov)\thanks{ORCID: \href{https://orcid.org/0000-0003-3151-6192}{0000-0003-3151-6192}}}
\affil[1]{Department of Computer Science, Technion -- Israel Institute of Technology, Haifa, Israel}
\affil[2]{Department of Computer Science, Tel-Hai University of Kiryat Shmona in the Galilee; MIGAL -- Galilee Research Institute}
\date{}
\begin{document}

\maketitle
\begin{abstract}
On the span of the first $N$ oscillator modes, Gauss--Hermite quadrature gives an exact change of basis between mode coefficients and $N$ weighted position space samples. We implement this transform with $O(N\operatorname{polylog}(N,1/\varepsilon))$ logical gates and polylogarithmic quantum width. The operator-error bound $\varepsilon$ holds on arbitrary superpositions and includes all auxiliary registers. The construction uses signed averages on adaptive windows to convert uniform-grid samples into weighted Hermite-root samples. Their varying widths control the amplification cost, giving the near-linear bound.
\end{abstract}

\section{Introduction}
Gaussian quadrature allows certain continuous integrals to be evaluated exactly from finitely many samples. For arbitrary combinations of the first $N$ oscillator modes, $N$ Hermite roots and their quadrature weights preserve inner products and recover every continuous mode coefficient exactly. Thus the finite sample representation introduces no discretization error on this space. The question here is the quantum cost of this change of basis. It also provides a finite representation of harmonic-oscillator evolution and fractional Fourier rotations, both diagonal in the oscillator basis.

Golub and Welsch give the classical eigenvalue construction of Gaussian quadrature~\cite{golub1969}. Pli\'s and Zak implement the finite Hermite transform on a quantum computer \cite{plis2025}; Webb and Maierhofer study stable classical construction of the normalized Hermite matrix \cite{webb2026}. Our contribution is a near-linear gate bound for the same finite transform. We convert an existing uniform-grid efficient Hermite sampler~\cite{jain2025} by using signed averages on disjoint adaptive windows. Adapting their widths to the local root spacing controls normalization even at the outermost roots.

Our conventions are as follows. Let $N=2^n$, where $n\geq1$ is the number of data qubits. We denote the discrete Bargmann transform from weighted root samples to oscillator coefficients by $B$ and construct its inverse $B^\dagger$. Circuit errors are operator-norm errors that include every auxiliary register, with fixed phase. The input is an arbitrary quantum superposition; loading classical data into its amplitudes and reading out all coefficients are separate tasks.

\subsection{The finite transform and its input}
We define the oscillator number basis in $L^2(\mathbb R)$ using the normalized Hermite functions:
\begin{equation}
 \varphi_k(x)=e^{-x^2/2} \frac{H_k(x)}{\sqrt{2^k k!\sqrt\pi}},\qquad k\geq0,
\end{equation}
where $H_k$ are the physicists' Hermite polynomials (with $H_0(x)=1$). The corresponding finite subspace is $\hilbert_N=\operatorname{span}\{\varphi_0,\ldots,\varphi_{N-1}\}$.

While continuous states can be formally mapped to a phase-space representation via the standard Bargmann--Fock continuous kernel \cite{Bargmann1961}, $B$ outputs number-basis coefficients rather than values on a complex grid; the circuit and its reverse implement the two directions of this finite change of basis.

Let $x_1<\cdots<x_N$ be the roots of $H_N$, with corresponding Gauss--Hermite quadrature weights~\cite{dlmf}:
\begin{equation}
 w_i=\frac{2^{N-1}N!\sqrt\pi}{N^2H_{N-1}(x_i)^2}>0.
\end{equation}
For a polynomial $p$ of degree at most $2N-1$, this quadrature exactly evaluates the Gaussian integral:
\begin{equation}\label{eq:quadrature_approx}
 \int_{\mathbb R}e^{-x^2}p(x)\,dx=\sum_{i=1}^Nw_i p(x_i).
\end{equation}

Define the elements of the transform matrix $B$ as~\cite{plis2025, webb2026}:
\begin{equation}\label{eq:B_eq}
 B_{k,i}=\frac{\sqrt{w_i}}{\sqrt{2^k k!\sqrt\pi}}H_k(x_i),
 \qquad 0\leq k<N,\quad 1\leq i\leq N.
\end{equation}
Rows are number-basis indices; column $i$ is stored under computational label $i-1$. Applying the exact quadrature to $H_kH_m$, of degree at most $2N-2$, yields orthonormality~\cite[Sec.~3.5(v)]{dlmf}:
{\begin{equation}\label{eq:unitary_B}
 BB^\dagger=B^\dagger B=\id.
\end{equation}}

The sampling convention is essential. For a continuous function with defined point values, we encode the amplitudes as weighted samples:
\begin{equation}\label{eq:weighted-input}
 (S_N\psi)_i=\sqrt{w_i}\,e^{x_i^2/2}\psi(x_i).
\end{equation}
For any state $\psi=\sum_{k=0}^{N-1}c_k\varphi_k\in\hilbert_N$, these weights ensure the exact recovery of both coefficients and norms:
\begin{equation}\label{eq:exact-quadrature-map}
 S_N\psi=B^\dagger \vec c,\qquad BS_N\psi=\vec c,\qquad
 \|S_N\psi\|_2=\|\psi\|_{L^2}.
\end{equation}
These identities express the core quadrature advantage: the encoded amplitudes are $S_N\psi$, not the unweighted values $\psi(x_i)$. The corresponding states $\ket{x_i}$ form an orthonormal finite position basis, rather than a set of continuous position eigenstates.
For inputs outside $\hilbert_N$, truncation and sampled-tail aliasing contribute additional errors.

\section{Main result}
For the main result take $N=2^n\geq2$ and $0<\varepsilon\leq1/2$, and let the input and output data registers have $n$ qubits. Let $E_{\mathrm{in}}$ and $E_{\mathrm{out}}$ denote their isometric embeddings into the full circuit space with every other register initialized to zero. Fixed register permutations identify their data labels with the mode and root labels, respectively.

\begin{theorem}[Near-linear implementation]\label{thm:near-linear}
There is a Clifford+$T$ quantum circuit $U_{N,\varepsilon}$
such that
\begin{equation}\label{eq:aw-main-contract}
 \bigl\|U_{N,\varepsilon}E_{\mathrm{in}}
       -E_{\mathrm{out}}B^\dagger\bigr\|\leq\varepsilon.
\end{equation}
The circuit uses $O\bigl(N\operatorname{polylog}(N,1/\varepsilon)\bigr)$ logical gates, hence at most this depth, and $\operatorname{polylog}(N,1/\varepsilon)$ qubits including workspace. Its reverse implements $B$ with the same error and resources.
\end{theorem}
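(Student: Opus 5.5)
The plan is to factor $B^\dagger$ through a uniform-grid Hermite sampler and then convert uniform samples into weighted root samples. By \eqref{eq:exact-quadrature-map}, applying $B^\dagger$ to $\vec c$ amounts to preparing the amplitudes $\sqrt{w_i}\,e^{x_i^2/2}\psi(x_i)$, where $\psi=\sum_k c_k\varphi_k$. I would write $B^\dagger=C\,A$. Here $A$ maps $\vec c$ to samples of $\psi$ on a fine uniform grid of $M=N\operatorname{polylog}(N,1/\varepsilon)$ points in $[-L,L]$, with $L\approx\sqrt{2N}+O(N^{-1/6}\log^{c}(1/\varepsilon))$. $C$ is a linear map taking the uniform samples to root samples. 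For $A$ I would invoke the efficient uniform-grid Hermite sampler of \cite{jain2025}, which I assume realizes an isometry with $O(M\operatorname{polylog})$ gates and polylog width. The tails of $\hilbert_N$ functions beyond $L$ are $O(\varepsilon)$ in $L^2$ by standard Plancherel--Rotach/Airy decay. That truncation error is charged to the budget.

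For $C$ I would use signed averages on disjoint adaptive windows. Around each root $x_i$, take a window $W_i$ of uniform grid points. Its width is a fixed fraction of the local spacing $\Delta_i\approx \pi/\sqrt{2N-x_i^2}$, enlarged to the Airy scale $N^{-1/6}$ near the edges. Form $\sum_{j\in W_i}a_{ij}\psi(y_j)$ with signed coefficients $a_{ij}$. The coefficients are chosen as a local polynomial or derivative-corrected interpolation, so that the combination reproduces $\psi(x_i)$ up to $\varepsilon/\mathrm{poly}(N)$ for every $\psi\in\hilbert_N$. This works because $\psi$ is band-limited at local frequency $\sqrt{2N-x^2}$, and $\operatorname{polylog}$ grid points per window suffice by Bernstein-type bounds. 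Multiplying by $\sqrt{w_i}e^{x_i^2/2}\approx\sqrt{\Delta_i}$ gives the target amplitude. Because the windows are disjoint, $C$ is block diagonal with $N$ blocks of polylog size. Each block is a single row vector, whose norm $\sqrt{w_i}e^{x_i^2/2}\|a_i\|$ is kept $O(1)$ by the choice of width, and each is implemented by a polylog-size state-preparation/LCU gadget controlled on $i$. The roots, weights and coefficients are computed coherently by reversible arithmetic: Newton iteration on the three-term recurrence, or asymptotic formulas plus refinement, with polylog gates per evaluation.

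The composite $CA$ is a non-unitary contraction. I would realize it as a block encoding with a known normalization $\alpha$ and then apply fixed-point or oblivious amplitude amplification. The point is that on $\hilbert_N$ the exact map is an isometry by \eqref{eq:exact-quadrature-map}. The approximate map is therefore $\varepsilon$-close to $\alpha^{-1}$ times an isometry on the input space, so oblivious amplitude amplification with $O(\alpha)$ rounds gives an operator-norm $\varepsilon$ approximation. It is independent of the input superposition, uncomputes all ancillas to zero, and fixes the phase.

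The main obstacle is bounding $\alpha$ by $\operatorname{polylog}$. The uniform-grid samples carry an overall factor of roughly $\sqrt{M/L}$ relative to $L^2$ norms. The root amplitudes need the local factor $\sqrt{\Delta_i}$, which varies by a factor $N^{1/3}$ between the bulk and the edge. A naive uniform window makes the outermost blocks either badly normalized or amplification-costly. I would handle this by showing that the window width proportional to $\Delta_i$ makes every block row norm comparable, up to constant factors, to the uniform-grid normalization. This needs explicit estimates for $w_ie^{x_i^2}$ and the spacings, both in the bulk and in the Airy regime. The gate count is then $O(\alpha)\cdot O(N\operatorname{polylog})$. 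Finally, a Solovay--Kitaev/gridsynth compilation of rotations to precision $\varepsilon/\mathrm{poly}(N)$ gives the Clifford+$T$ circuit, and inverting every gate implements $B$ with the same bound, since $\|U^\dagger E_{\mathrm{out}}-E_{\mathrm{in}}B\|$ is controlled by the same isometric closeness.
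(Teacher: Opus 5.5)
Your architecture is the paper's: the Jain et al.\ uniform-grid sampler, signed averages on disjoint windows whose radius follows the local scale $q_i^{-1/2}$ with $q_i=2N-x_i^2+N^{1/3}$, a block encoding, and fixed-phase amplification. The gap is in the step you yourself flag as the main obstacle, and as written it fails. On the normalized samples $\sqrt h\,\psi(y_\ell)$, row $i$ of your map $C$ has entries $\sqrt{v_i}\,a_{ij}/\sqrt h$, with $v_i=w_ie^{x_i^2}$. Its norm is therefore $\sqrt{v_i/h}\,\|a_i\|_2$, and the window width enters only through $\|a_i\|_2$.

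Weights supported on $p$ points that reproduce constants satisfy $\|a_i\|_2\geq p^{-1/2}$. Each bulk window must hold at least $d+1$ grid points within a radius $O(N^{-1/2})$, so the global step obeys $h\lesssim N^{-1/2}/d$. At the outermost roots $v_i\asymp N^{-1/6}$. With interpolation weights, or ``blocks of polylog size'', the edge rows therefore have norm $\gtrsim N^{1/6}\sqrt{d/p}$. Amplification then costs an extra factor of roughly $N^{1/6}$, which is exactly the failure you wanted to avoid.

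Enlarging a window helps only if the weights are spread over all $m_i=2\rho_i/h$ grid points it contains, with $\|a_i\|_2^2=O(d/m_i)$, so that $h$ cancels. The paper does this with the Legendre reproducing kernel sampled at the window midpoints, $a_{ij}=K_d(u_{i,j})/m_i$. This gives $RR^\dagger=\operatorname{diag}(v_iD_i/(2\rho_i))$ with $D_i\leq d+5/4$. Lemma~\ref{lem:local-weight} gives $v_i\asymp q_i^{-1/2}\asymp\rho_i$, so $\|R\|^2=O(d)$ and $\overline A=O(\sqrt{d+1})$.

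The edge windows then contain $\Theta(N^{1/3})$ times as many points as bulk windows, and polynomially many on the paper's grid. They cannot be handled by a generic polylog-size state preparation. Instead, the paper applies Hadamards to $\log_2 m_i$ bits and rotates a flag to the computed signed amplitude $bK_d(u_{i,j})/\sqrt{D_i}$. It then runs an exact Grover sequence whose angle is common to all rows, which fixes the phase on root-label superpositions (Lemma~\ref{lem:aw-row-state}). The reproduction error also needs an analytic bound uniform up to the turning points, where ``band-limited at local frequency $\sqrt{2N-x^2}$'' degenerates. The paper obtains it from complex Mehler, $\sum_{k<N}|\varphi_k(c_i+\rho_iz)|^2\leq C_M\sqrt{q_i}$ on $|z|\leq2$ (Lemma~\ref{lem:aw-mehler}), followed by Cauchy estimates.

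Your cost accounting also misplaces the factor $N$. The sampler of Supplementary Theorem~\ref{S-thm:uniform-grid} costs $\operatorname{poly}(\log L,\log(1/\tau))$ gates, not $O(M\operatorname{polylog})$. Its accuracy needs a polynomially large grid, $L\geq C(N+1)^3\tau^{-8}$, and in the paper the midpoint error $\alpha_h h$ also forces an inverse-polynomial $h$. The DFT fixes $h=\sqrt{2\pi/L}$, so any grid fine enough to resolve the windows extends well beyond $\pm\sqrt{2N}$; $R$ simply vanishes there. Your assumed sampler cost, applied to the grid the sampler actually needs, would exceed near-linear. Its true cost makes the grid size free.

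In the paper the linear factor comes from $O(N)$-record scans. These load $(a_i,s_i,D_i)$ and evaluate the reverse lookup $\lambda(\ell)$ that erases the root label; you never say how the root label is uncomputed after writing the grid address. Coherent Newton iteration on the three-term recurrence costs $\Theta(N)$ per evaluation, not polylog. That is affordable per call, but it needs certified convergence to the $i$-th root in superposition, which the paper avoids with classically certified records. Finally, the sampler is only $\tau$-close to the non-isometric $J_L$, and this error enters the amplified block as $\overline A\tau$.
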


Here $\operatorname{polylog}(N,1/\varepsilon)$ denotes a fixed polynomial
in $1+\log N+\log(1/\varepsilon)$. The bound is for quantum execution after classical compilation; it excludes arbitrary input loading and full classical readout. For $N=1$ the transform is the identity.
Supplementary Section~\ref{S-sec:input-error} gives the separate continuous-input error bounds.

While Theorem~\ref{thm:near-linear} establishes the headline complexity bound for the transform, the main construction is the efficient averaging window operator $R$. Generic synthesis gives a quadratic gate upper bound, but does not exploit the spacing and weights of the Hermite roots. The operator $R$ maps uniform-grid samples to the non-uniform weighted root samples. The near-linear block-encoding of this window operator enables the efficient implementation of the transform $B$.

The uniform-grid transform used in the proof is based on~\cite{jain2025}; its precise sampling and phase guarantees are proved in Supplementary Theorem~\ref{S-thm:uniform-grid}.

\section{Applications and Use-Cases}
\label{sec:applications}

The transform $B$ connects weighted position samples to the Hermite basis, in which both harmonic evolution and fractional Fourier transformation are diagonal. We describe these two applications. The exact identities below concern normalized states in the finite space $\hilbert_N$; implementing them requires the finite-precision circuits of Theorem~\ref{thm:near-linear}.

\subsection{Fast-Forwarding the Quantum Harmonic Oscillator}
In oscillator units, the quantum harmonic oscillator has Hamiltonian $H_{\rm osc}=(-d^2/dx^2+x^2)/2$ and eigenvalues $E_k=k+1/2$. Its restriction to $\hilbert_N$ can be evolved directly for any time $t$, without a time-stepping or product-formula approximation. By~\eqref{eq:exact-quadrature-map}, the corresponding operator on weighted samples is
{\begin{equation}\label{eq:qho-finite-evolution}
 U_X(t)=B^\dagger D(t)B,\qquad
 D(t)=\diag\bigl(e^{-i(k+1/2)t}\bigr)_{k=0}^{N-1}.
\end{equation}}

To fast-forward an input state by an arbitrary time $t$, we map the system to the energy basis via $B$, execute a diagonal matrix of phase rotations corresponding to the exact QHO energy eigenvalues $E_k=k+1/2$, and return to the spatial basis using $B^\dagger$. The exact finite-space time-evolution sequence is:
\begin{equation}
\begin{quantikz}
\lstick{$\ket{S_N\psi(0)}$} & \qwbundle{n} & \gate{B} & \gate{D(t)} & \gate{B^\dagger} & \qw \rstick{$\ket{S_N\psi(t)}$}
\end{quantikz}
\end{equation}
The diagonal phases factor over the binary digits of $k$, together with the global phase $e^{-it/2}$. Thus the number of phase rotations need not grow with the number of elapsed oscillator periods; their precision and classical angle calculation remain part of the implementation. Figure~\ref{fig:combined_demos}(a) illustrates the coherent-state trajectory. A nontrivial coherent state has infinitely many Hermite coefficients, so its use in $\hilbert_N$ requires truncation. The exact projected evolution preserves that truncation error in $L^2$; transform synthesis and phase synthesis contribute separate errors.

Both plots below use normalized projections onto $N=32$ modes and reconstruct the continuous density $|\psi(x)|^2$ from those modes. The omitted squared norms are approximately $1.46\times10^{-18}$ for the coherent state and $6.87\times10^{-17}$ for the cat state. The accompanying reproducibility package supplies the coefficients, plotted data, and generating code.

\begin{figure}[ht]
    \centering

    \begin{subfigure}[t]{0.48\linewidth}
        \centering
        \includegraphics[width=\linewidth,alt={Density-band plot of a coherent wavepacket oscillating sinusoidally in position. The dashed curve marks its center; parameters are in the caption.}]{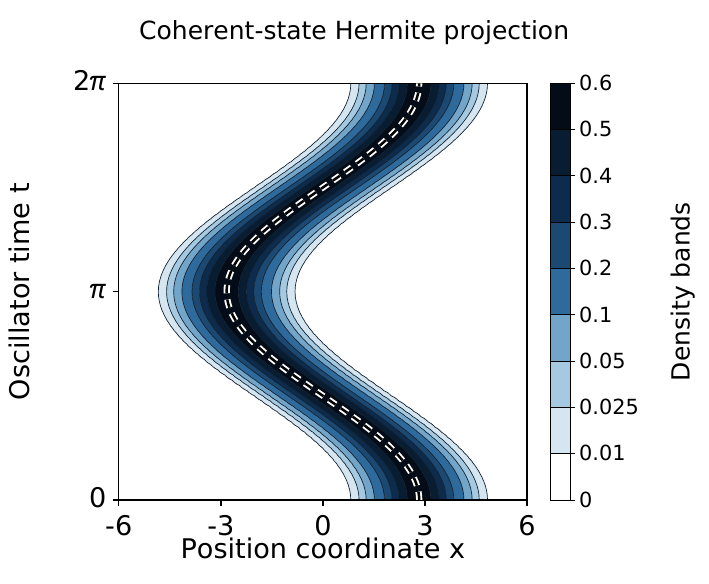}
        \caption{Coherent state with real amplitude $2$, evolved for $0\leq t\leq2\pi$ in oscillator units. The dashed curve is $x=2\sqrt2\cos t$.}
        \label{fig:qho_evolution}
    \end{subfigure}
    \hfill
    \begin{subfigure}[t]{0.48\linewidth}
        \centering
        \includegraphics[width=\linewidth,alt={Density-band plot of an even cat state under fractional Fourier rotation. Two position peaks become momentum fringes halfway through the rotation.}]{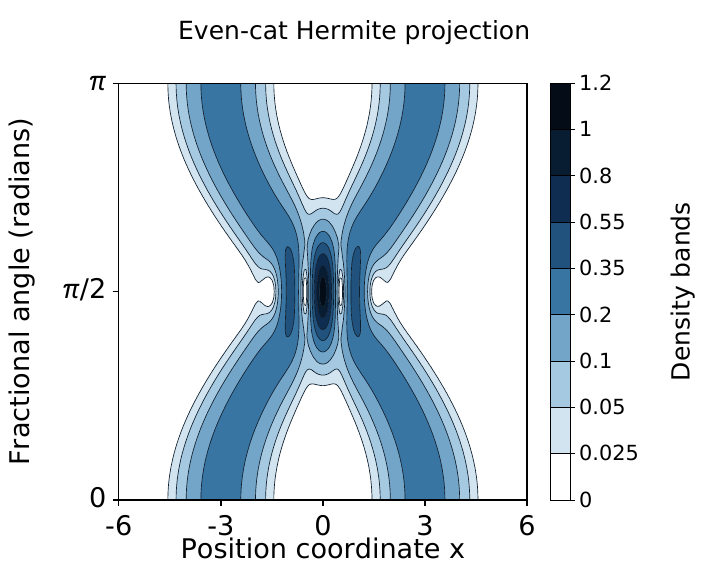}
        \caption{Even cat state with displacement $x_0=3$, rotated through $0\leq\alpha\leq\pi$. The momentum-domain fringes occur at $\alpha=\pi/2$; the even position profile returns at $\alpha=\pi$.}
        \label{fig:frft_demo}
    \end{subfigure}

    \caption{Visual demonstrations of the Quantum Hermite Transform applied to physical time-evolution (left) and continuous-variable signal processing (right). Filled bands show probability-density intervals. The finite-transform identities apply to the retained Hermite modes; these illustrations are not gate-level or hardware error benchmarks.}
    \label{fig:combined_demos}
\end{figure}

\subsection{Continuous Fractional Fourier Analysis}
In signal processing, the standard quantum Fourier transform acts on a cyclic group, whereas the Hermite representation describes functions on the real line. We use the continuous Fourier convention $(F\psi)(p)=(2\pi)^{-1/2}\int_{\mathbb R}e^{-ipx}\psi(x)\,dx$. Its fractional extension is defined spectrally by $F_\alpha\varphi_k=e^{-ik\alpha}\varphi_k$, with $F_0=\id$ and $F_{\pi/2}=F$~\cite[Secs.~1--3]{kutay2002frft}.

The fractional Fourier transform (FrFT) rotates a signal in phase space by an angle $\alpha$. On $\hilbert_N$, its exact representation on weighted samples is $B^\dagger\diag(e^{-ik\alpha})_{k=0}^{N-1}B$. Writing $\psi_\alpha=F_\alpha\psi_0$, this gives the circuit
\begin{equation}
\begin{quantikz}
\lstick{$\ket{S_N\psi_0}$} & \qwbundle{n} & \gate{B} & \gate{\diag(e^{-ik\alpha})} & \gate{B^\dagger} & \qw \rstick{$\ket{S_N\psi_\alpha}$}
\end{quantikz}
\end{equation}
The QHO evolution differs from this transform by the global phase $e^{-it/2}$ when $\alpha=t$~\cite[Eq.~(16)]{kutay2002frft}. Both operators preserve $\hilbert_N$ and the norm.

The example in Figure~\ref{fig:combined_demos}(b) uses a bimodal Schr\"odinger cat state. In the position domain ($\alpha=0$), the state is described by a superposition of displaced coherent states, with real displacement $x_0$:
\begin{equation}
\psi_0(x) \propto e^{-\frac{1}{2}(x-x_0)^2} + e^{-\frac{1}{2}(x+x_0)^2}.
\end{equation}
As the signal is rotated into the momentum domain ($\alpha=\pi/2$), the Fourier transform of these displaced spatial peaks yields the interference pattern
\begin{equation}
\tilde \psi(p) = \psi_{\pi/2}(p) \propto e^{-\frac{1}{2}p^2} \cos(p x_0).
\end{equation}
These are continuous formulas. Their normalized Hermite projections yield finite-register examples, with the discarded tail and any sampled-tail aliasing treated separately. The finite operator preserves the projected norm exactly and imposes no periodic spatial boundary.

Supplementary Section~\ref{S-app:noise-diagnostic} gives a small, reproducible illustration of qubit noise viewed through Hermite-mode populations. The transform specifies the preparation and measurement bases for that observable. We pose this mode-population diagnostic as an open direction for future research to further formalize.

\section{Averaging Operator \texorpdfstring{$R$}{R} on a Uniform Grid}\label{sec:adaptive-windows}

To implement the weighted-root sampling map $B^\dagger$ in \eqref{eq:weighted-input}, we average uniform-grid samples near each Hermite root.  Polynomial reproduction controls the error on the retained Hermite space; radii at the local root-spacing scale bound the averaging norm uniformly in $N$ for fixed kernel degree and hence control coherent amplification. We prove Theorem~\ref{thm:near-linear} using the Introduction's notation and error conventions.

While coherent averaging and reversible changes between uniform sampling grids were pioneered by Kitaev and Webb \cite{kitaev2009resampling}, their construction does not map to non-uniform, weighted quadratures. Here, we extend and realize the standard polynomial kernel and amplitude amplification specifically to solve the non-uniform weighted Gauss--Hermite sampling problem.

\subsection{Local weights, separation, and analytic control}

The averaging row norm depends on the effective weight divided by its
radius, so we need a local weight estimate.  Define
\begin{equation}\label{eq:aw-local-scales}
 v_i=w_i e^{x_i^2},\qquad
 q_i=2N-x_i^2+N^{1/3},\qquad
 c=\frac1{64},\qquad r_i=\frac{c}{\sqrt{q_i}},
 \qquad 1\leq i\leq N.
\end{equation}
Here, $v_i$ is the effective weight, $q_i$ defines the local density scale tracking spatial root compression, $c$ is a geometric safety constant guaranteeing disjoint windows, and $r_i$ sets the dynamically scaled safe radius for the averaging window.
In particular $(B^\dagger)_{i,k}=\sqrt{v_i}\varphi_k(x_i)$, with the
root-label convention above.  A radius used below is always positive.

\begin{lemma}[Local weight and root separation]\label{lem:local-weight}
There are absolute constants $c_W,C_W>0$ such that
\begin{equation}\label{eq:aw-local-weight}
 N^{1/3}\leq q_i\leq3N,\qquad
 \frac{c_W}{\sqrt{q_i}}\leq v_i\leq\frac{C_W}{\sqrt{q_i}}.
\end{equation}
Every consecutive root gap $g_i=x_{i+1}-x_i$ satisfies
\begin{equation}\label{eq:aw-local-gap}
 g_i\geq\max\{q_i^{-1/2},q_{i+1}^{-1/2}\},\qquad
 r_i+r_{i+1}\leq2c g_i.
\end{equation}
\end{lemma}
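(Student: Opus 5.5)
The range of $q_i$ comes first. All roots of $H_N$ lie strictly inside $(-\sqrt{2N+1},\sqrt{2N+1})$. More precisely, the largest root satisfies $x_N^2\le 2N+1-c_0N^{1/3}$ for an absolute $c_0>0$, by the Plancherel--Rotach/Airy asymptotics for the largest zero (DLMF 18.16). The plan is to use the cruder but uniform bound $x_N^2\le 2N$; strictly, this is what the statement needs, since the $+N^{1/3}$ in $q_i$ then gives $q_i\ge N^{1/3}$ directly. If only $x_N^2\le 2N+1-c_0N^{1/3}$ with $c_0<1$ is available, I would instead keep $q_i\asymp 2N+1-x_i^2+N^{1/3}$ up to absolute constants, which is all the later arguments use. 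The upper bound $q_i\le 3N$ is immediate from $x_i^2\ge0$ and $N^{1/3}\le N$.

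For the weight estimate, I would use the Christoffel-function identity $1/w_i=\sum_{k<N}p_k(x_i)^2$ with orthonormal $p_k$. Equivalently, $v_i^{-1}=\sum_{k<N}\varphi_k(x_i)^2=K_N(x_i,x_i)$, the diagonal of the Hermite-function kernel. The known uniform estimate (Freud, Levin--Lubinsky) is
\begin{equation*}
 K_N(x,x)\asymp\sqrt{2N+1-x^2}+N^{-1/6}\quad\text{for } |x|\le\sqrt{2N+1}.
\end{equation*}
This is equivalent, up to absolute constants, to $\sqrt{q_i}$ at the roots, because $\sqrt{a+N^{1/3}}\asymp\sqrt a+N^{1/6}$. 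So $v_i\asymp q_i^{-1/2}$. I would cite this estimate rather than reprove it. Checking that the $N^{-1/6}$ versus $N^{1/6}$ edge term is consistent with the normalization of $q_i$ (which grows like $N$ in the bulk) is where care is needed. In the bulk, $K_N(x,x)\approx\pi^{-1}\sqrt{2N-x^2}$ (the semicircle density), so $v_i\asymp(2N-x_i^2)^{-1/2}$; at the edge, $K_N\asymp N^{1/6}$ (the Airy regime). Both regimes match $q_i^{-1/2}$, and the constants are absolute.

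For the gap bound, I would use the Sturm comparison theorem on $\varphi_N$, which solves $u''+(2N+1-x^2)u=0$. On an interval where the frequency squared is at most $Q$, consecutive zeros are separated by at least $\pi/\sqrt Q$. For $x\in[x_i,x_{i+1}]$, the frequency squared is at most $2N+1-\min(x_i^2,x_{i+1}^2)$. I need this to be at most $\pi^2\max(q_i,q_{i+1})$ whenever the roots do not straddle $0$. That holds because $2N+1-x^2\le q_j+1\le 2q_j$, and $2<\pi^2$ leaves room, giving $g_i\ge \pi/\sqrt{2q_j}\ge q_j^{-1/2}$ for the relevant $j$. To get the bound for both $j=i$ and $j=i+1$, I need $q_i$ and $q_{i+1}$ to be comparable up to a factor at most $\pi^2/2$. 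I would prove this from the gap upper bound $g_i\lesssim q^{-1/2}$ (Sturm again, with a lower frequency bound away from the edge, and the Airy spacing $N^{-1/6}$ near it). Then $|x_{i+1}^2-x_i^2|\le g_i(|x_i|+|x_{i+1}|)\lesssim\sqrt{2N}\,q^{-1/2}\lesssim q$, since $q\ge N^{1/3}$ gives $\sqrt N\le q^{3/2}$. The constant here needs tracking; if it is too large, I would weaken the claim to absolute constants.

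The last inequality then follows: $r_i+r_{i+1}=c(q_i^{-1/2}+q_{i+1}^{-1/2})\le 2c\max(q_i^{-1/2},q_{i+1}^{-1/2})\le 2cg_i$. The main obstacle is the uniform-in-$N$ edge analysis for both the Christoffel estimate and the gap comparison.
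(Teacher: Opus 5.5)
\textbf{There is a genuine gap in the separation step.} Bounding the potential $2N+1-y^2$ on $[x_i,x_{i+1}]$ by its value at the root nearer the origin only gives $g_i\geq\pi/\sqrt{2\max(q_i,q_{i+1})}$. The lemma needs $g_i\geq\min(q_i,q_{i+1})^{-1/2}$ with constant exactly $1$, and $r_i+r_{i+1}\leq2cg_i$ inherits that constant.

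You try to bridge this with the comparability $\max(q_i,q_{i+1})\leq(\pi^2/2)\min(q_i,q_{i+1})$, deduced from a uniform upper bound $g_i\leq K\min(q_i,q_{i+1})^{-1/2}$. You do not prove this upper bound or track its constant. With your step $\sqrt N\leq q^{3/2}$, the chain gives $\max q\leq(1+2\sqrt2K)\min q$, so it needs roughly $K\leq1.39$. For the two largest roots, however, the Airy asymptotics give $g_{N-1}\sqrt{q_N}\to2^{-1/2}(a_2-a_1)\sqrt{2a_1+1}\approx2.95$, where $-a_1,-a_2$ are the first zeros of $\mathrm{Ai}$. So the argument as outlined cannot deliver the needed factor. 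Your fallback of weakening to absolute constants would not prove the stated inequality.

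You also leave open the case $x_i<0<x_{i+1}$. It always occurs for the middle pair, since $N=2^n$ is even.

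\textbf{The missing idea is to let the small-gap hypothesis itself control the potential, separately at each endpoint.} This is what the paper does. Fix either endpoint $x$ of the gap, put $q=2N-x^2+N^{1/3}$, and suppose $g<q^{-1/2}$. For every $y$ in the gap, $y^2\geq x^2-2|x|g$. Hence $2N+1-y^2\leq q+2\sqrt{2N}\,q^{-1/2}\leq(1+2\sqrt2)q$, using $N^{1/3}\geq1$, $|x|<\sqrt{2N}$ and $q^3\geq N$.

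Your Sturm comparison then gives $g\geq\pi/\sqrt{(1+2\sqrt2)q}>q^{-1/2}$, a contradiction. The paper instead multiplies $\varphi_N''+(2N+1-y^2)\varphi_N=0$ by $\varphi_N$, integrates between the zeros, and applies the Dirichlet Poincar\'e inequality; either works. Your estimate $|x_{i+1}^2-x_i^2|\leq g_i(|x_i|+|x_{i+1}|)$ is the right computation. It just has to be fed by this contradiction hypothesis rather than by a gap upper bound, and then no comparability, edge asymptotics, or sign cases are needed.

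Both this argument and $q_i\geq N^{1/3}$ rely on $x_i^2<2N$, not merely $x_i^2<2N+1$. You assume this without proof. It follows from Gershgorin's bound for the Jacobi matrix with zero diagonal and off-diagonal entries $\sqrt{k/2}$, whose absolute row sums are below $\sqrt{2N}$.

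Your weight step is the same as the paper's appeal to the Levin--Lubinsky Christoffel estimate, once you correct the edge term in your display from $N^{-1/6}$ to $N^{1/6}$.
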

The proof is in Supplementary Section~\ref{S-app:local-weight}.

For parameter selection, one may use certified root and weight intervals
to obtain a dyadic number
\begin{equation}\label{eq:aw-certified-weight}
 \overline C_W\geq\max\{1,\max_i v_i\sqrt{q_i}\}.
\end{equation}
Refining the intervals makes this bound a fixed-factor estimate of the maximum.  Lemma~\ref{lem:local-weight} then bounds $\overline C_W$ uniformly without a numerical source constant.

Hermite functions are entire, but the approximation requires a bound
uniform in the degree and the root, including the edges.  The next lemma
provides that bound on a disk scaled to $r_i$.  Cauchy's estimate will then
control all Taylor remainders with one choice of the kernel degree.

\begin{lemma}[Analytic bound at the local scale]\label{lem:aw-mehler}
Let $c_i\in\mathbb R$. Suppose $|c_i-x_i|\leq h\leq r_i$ and $0<\rho_i\leq r_i$.
There is an absolute constant $C_M>0$ such that, for every root and every complex $z$ with $|z|\leq2$,
\begin{equation}\label{eq:aw-mehler-row}
 \sum_{k=0}^{N-1}|\varphi_k(c_i+\rho_i z)|^2 \leq C_M\sqrt{q_i}.
\end{equation}
Consequently, there exists an absolute constant $B_0 > 0$ such that the row vector
$f_i(z)=(\sqrt{v_i}\varphi_k(c_i+\rho_i z))_{k=0}^{N-1}$
has norm at most $B_0$. The analytic matrix with rows $f_i(z)$ has operator norm at most
$B_0\sqrt N$ on this disk.
\end{lemma}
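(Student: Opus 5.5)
The plan is to bound the diagonal of the truncated Hermite kernel at a complex point with Mehler's formula, using an Abel damping parameter tied to the local scale $q_i$. Write $w=c_i+\rho_i z=u+iy$ with $u,y$ real. Since the $\varphi_k$ have real coefficients, $|\varphi_k(w)|^2=\varphi_k(w)\varphi_k(\bar w)$. For $0<s<1$ and $k<N$ we have $s^{k}\geq s^{N}$, so
\begin{equation*}
 \sum_{k=0}^{N-1}|\varphi_k(w)|^2\leq s^{-N}\sum_{k\geq0}s^k\varphi_k(w)\varphi_k(\bar w).
\end{equation*}
The right-hand sum is given in closed form by Mehler's formula (the series converges for complex arguments when $|s|<1$):
\begin{equation*}
 \sum_{k\geq0}s^k\varphi_k(a)\varphi_k(b)=\frac{1}{\sqrt{\pi(1-s^2)}}\exp\Bigl(\frac{4sab-(1+s^2)(a^2+b^2)}{2(1-s^2)}\Bigr).
\end{equation*}
Setting $a=w$, $b=\bar w$ gives $ab=u^2+y^2$ and $a^2+b^2=2(u^2-y^2)$. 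The exponent then equals
\begin{equation*}
 -\frac{1-s}{1+s}\,u^2+\frac{1+s}{1-s}\,y^2 .
\end{equation*}
With $\tau=1-s\in(0,1/2]$, the whole bound is therefore at most a constant times
\begin{equation*}
 \tau^{-1/2}\exp\bigl(N\tau+O(N\tau^2)-\tfrac{\tau}{2}u^2+\tfrac{2}{\tau}y^2\bigr).
\end{equation*}

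Next I choose $\tau=1/q_i$; the lower bound $q_i\geq N^{1/3}$ from Lemma~\ref{lem:local-weight} keeps $\tau\leq1/2$ for $N\geq1$. The remaining task is to show the exponent is $O(1)$.

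\emph{Imaginary part.} We have $|y|\leq\rho_i|z|\leq2r_i=2c/\sqrt{q_i}$, so $2y^2/\tau\leq8c^2$.

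\emph{Real part.} The point $u$ lies within $h+2\rho_i\leq3r_i$ of $x_i$, so $u^2\geq x_i^2-6|x_i|r_i$. Then
\begin{equation*}
 N\tau-\tfrac{\tau}{2}u^2\leq\frac{2N-x_i^2}{2q_i}+\frac{3|x_i|r_i}{q_i}\leq\frac12+\frac{3c\sqrt{2N}}{q_i^{3/2}}\leq\frac12+3\sqrt2\,c,
 \end{equation*}
 where the last step uses $q_i^{3/2}\geq\sqrt N$, again from $q_i\geq N^{1/3}$. This inequality is exactly where the $N^{1/3}$ term in $q_i$ is needed.

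\emph{Second-order term.} We have $N\tau^2=N/q_i^2\leq N^{1/3}$, which is not bounded. So the Mehler exponent has to be expanded with a little care. I would use the exact factor $s^{-N}=e^{-N\log(1-\tau)}$ and the exact coefficient $(1-s)/(1+s)=\tau/(2-\tau)$, rather than a Taylor expansion. The $\tau^2$ corrections then combine into $N\tau^2-\tfrac{\tau^2}{4}u^2$ plus higher-order terms. Since $u^2\approx x_i^2\leq 2N$ only in the edge regime where $q_i\sim N^{1/3}$, I need to check that $(2N-u^2)\tau^2$ is $O(1)$ there. Elsewhere $q_i\gtrsim N$ makes $N\tau^2=O(1)$ directly. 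I expect this bookkeeping of second-order terms near the turning point to be the main obstacle. If it fails, the fallback is to take $\tau=\kappa/q_i$ with a small absolute $\kappa$, or $\tau=\min(1/q_i,N^{-1/2})$ with a case split, and to absorb the loss into $C_M$.

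Combining these estimates gives $\sum_k|\varphi_k(w)|^2\leq C\tau^{-1/2}=C\sqrt{q_i}$, which is \eqref{eq:aw-mehler-row}. The consequences then follow directly. By Lemma~\ref{lem:local-weight}, $\|f_i(z)\|^2=v_i\sum_k|\varphi_k(w)|^2\leq C_WC_M$, so one may take $B_0=\sqrt{C_WC_M}$. The operator norm of the $N\times N$ matrix with rows $f_i(z)$ is at most its Frobenius norm, which is at most $\sqrt{N}B_0$.
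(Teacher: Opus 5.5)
Your strategy is the paper's: Mehler's formula at the conjugate pair $(w,\bar w)$, damping at scale $1/q_i$, the factor $s^{-N}$, and the same real-part, imaginary-part and $q_i^3\geq N$ estimates. The row-norm and Frobenius consequences also match the paper. The one difference is the parameterization, and that is exactly what produces the step you leave open.

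The paper sets $s=e^{-t}$ with $t=q_i^{-1}$. Then $s^{-N}=e^{Nt}$ is exactly linear in $t$, and $(1-s)/(1+s)=\tanh(t/2)\geq t/2-t^3/24$ has no quadratic term. The exponent is therefore at most $\tfrac t2(2N-X^2)+t^3X^2/24+3Y^2/t$, and the only correction is cubic, controlled by $q_i^3\geq N$. With $s=1-\tau$ you instead have a genuine term $N\tau^2/2$, of size $N^{1/3}$ at the edge. Your proof is incomplete until that term is cancelled.

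Your displayed combination ``$N\tau^2-\tfrac{\tau^2}{4}u^2$'' has the wrong coefficient; read literally, it would not cancel. The correct bookkeeping closes the step uniformly in $i$. From $-\log(1-\tau)\leq\tau+\tau^2/2+\tau^3/(3(1-\tau))$ and $\tau/(2-\tau)\geq\tau/2+\tau^2/4$, the exponent is at most
\[
 \Bigl(\frac{\tau}{2}+\frac{\tau^2}{4}\Bigr)(2N-u^2)+\frac{N\tau^3}{3(1-\tau)}+\frac{2y^2}{\tau}.
\]
Your own real-part estimate gives $2N-u^2\leq q_i+6|x_i|r_i\leq(1+6\sqrt2\,c)\,q_i$, so with $\tau=1/q_i$ every term is $O(1)$. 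No case split is needed. Your proposed split (``elsewhere $q_i\gtrsim N$'') would in fact miss the intermediate range $N^{1/3}\ll q_i\ll\sqrt N$, where $N\tau^2$ is still unbounded.

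Your fallbacks would not rescue the argument if this cancellation were absent, so it is essential rather than optional. With $\tau=\kappa/q_i$, the term $N\tau^2=\kappa^2N/q_i^2$ is still of order $N^{1/3}$ at the edge. With $\tau=\min(1/q_i,N^{-1/2})$, near the edge the prefactor becomes $\tau^{-1/2}=N^{1/4}$ where $\sqrt{q_i}\asymp N^{1/6}$ is required, and $2y^2/\tau$ grows like $N^{1/6}$.

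A minor point: $q_i\geq N^{1/3}$ gives $\tau\leq1/2$ only for $N\geq8$. For smaller $N$ you still have $\tau\leq2^{-1/3}$ (and $\tau=1/3$ when $N=1$), which only changes the constants.
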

The proof is in Supplementary Section~\ref{S-app:mehler}.

\subsection{The Averaging Operator \texorpdfstring{$R$}{R}}

A reproducing kernel turns uniform-grid samples into root samples by
reproducing the initial Taylor terms; midpoint quadrature controls the
discretization error. The Legendre reproducing kernel supplies the required polynomial weights.

Let $P_j$ be the Legendre polynomial with
$P_j(1)=1$.  For an integer $d\geq0$ define
\begin{equation}\label{eq:aw-kernel}
 K_d(u)=\sum_{j=0}^d(2j+1)P_j(0)P_j(u),
 \qquad C_0=(d+1)^2,\quad C_1=(d+1)^4.
\end{equation}
The orthogonal-polynomial reproducing-kernel identity specializes to
\begin{equation}\label{eq:aw-reproduction}
 \frac12\int_{-1}^1K_d(u)p(u)\,du=p(0),
 \qquad \deg p\leq d.
\end{equation}
Indeed, expand $p$ in the Legendre basis and use
$\int_{-1}^1P_jP_k=2\delta_{j,k}/(2j+1)$.
The bounds $|P_j(u)|\leq1$ and
$|P_j'(u)|\leq j(j+1)/2$ on $[-1,1]$ give
\begin{equation}\label{eq:aw-kernel-bounds}
 \|K_d\|_\infty\leq C_0,\qquad
 \|K_d'\|_\infty\leq C_1.
\end{equation}

Let $L$ be a power of two, put $h=\sqrt{2\pi/L}$, and define the $L\times N$ uniform sampling matrix by
\begin{equation}\label{eq:uni_grid_def}
 I_L=[-L/2,L/2),\qquad y_\ell=h\ell,
 \quad \ell\in I_L\cap\mathbb Z,\qquad
 J_L[\ell,k]=\sqrt h\,\varphi_k(y_\ell).
\end{equation}
A signed grid address $\ell$ is stored as its $\log_2L$-bit residue, interpreted in $[-L/2,L/2)$. All address arithmetic and interval tests below use this convention.
Assuming the quantum grid has been made sufficiently fine, for example $h\leq\min_i r_i/16$, choose an even power of two $m_i$ and an integer start address $a_i$ such that
\begin{equation}\label{eq:aw-window-choice}
 \rho_i=\frac{m_i h}{2}\in(r_i/4,r_i],\qquad
 c_i=h\left(a_i+\frac{m_i-1}{2}\right),\qquad
 |c_i-x_i|\leq h
\end{equation}
where $\rho_i$ is the averaging window radius, and $c_i$ is the window midpoint which must be chosen to be sufficiently close to the $i$-th root $x_i$. For each root $i$ let the $m_i$ coordinates inside the i'th window be
\begin{equation}\label{eq:aw-midpoints}
 u_{i,j}=\frac{2j+1-m_i}{m_i},\qquad 0\leq j<m_i.
\end{equation}
Then $c_i+\rho_i u_{i,j}=h(a_i+j)$ exactly.  The physical cells
$[c_i-\rho_i,c_i+\rho_i]$ are disjoint
by Lemma~\ref{lem:local-weight}. Indeed, for an adjacent gap $g_i$, the two radii and center displacements total at most $(2c+c/8)g_i<g_i$. The grid will also be chosen large enough to contain every window.

The continuous average and its midpoint discretization are, respectively,
\begin{align}
 T_{\mathrm{cont}}[i,k]
 &=\frac{\sqrt{v_i}}2\int_{-1}^1
     K_d(u)\varphi_k(c_i+\rho_i u)\,du,
       \label{eq:aw-continuous-window}\\
 T[i,k]
 &=\frac{\sqrt{v_i}}{m_i}\sum_{j=0}^{m_i-1}
     K_d(u_{i,j})\varphi_k(c_i+\rho_i u_{i,j}).
       \label{eq:aw-discrete-window}
\end{align}
The matrix $T$ approximates $B^\dagger$; the following proposition bounds its error before circuit synthesis.

\begin{proposition}[Uniform approximation by adaptive windows]
\label{prop:aw-approximation}
Under \eqref{eq:aw-window-choice}, define
\begin{equation}\label{eq:aw-error-coefficients}
 \alpha_d=B_0C_0\sqrt N=O(\sqrt N(d+1)^2),\qquad
 \alpha_h=\frac{\sqrt3B_0}{c}(1+C_0+C_1)N.
\end{equation}
Then
\begin{equation}\label{eq:aw-error}
 \|T-B^\dagger\|\leq E_d(h):=\alpha_d2^{-d}+\alpha_hh.
\end{equation}
Both coefficients depend on $N$ and $d$, through
$C_0=(d+1)^2$ and $C_1=(d+1)^4$, but are independent of $h$ and $L$.
\end{proposition}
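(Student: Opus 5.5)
Split the error with the triangle inequality as
\[
\|T-B^\dagger\|\leq\|T_{\mathrm{cont}}-B^\dagger\|+\|T-T_{\mathrm{cont}}\|,
\]
and show that the first term is at most $\alpha_d2^{-d}$ and the second at most $\alpha_hh$. Both terms will be bounded row by row, then assembled into an operator norm. For the first we use the analytic matrix bound of Lemma~\ref{lem:aw-mehler} directly. For the second we use a Frobenius-type bound over the $N$ rows, with the row norms controlled by Lemma~\ref{lem:aw-mehler} and Lemma~\ref{lem:local-weight}.

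\emph{Truncation term.} Fix a coefficient vector $\vec c$ and set $g_i(z)=\sum_k c_k\sqrt{v_i}\varphi_k(c_i+\rho_i z)=f_i(z)\cdot\vec c$. This function is entire, and on $|z|\leq2$ the vector $(g_i(z))_i$ has norm at most $B_0\sqrt N\|\vec c\|$ by Lemma~\ref{lem:aw-mehler}. Write $g_i=p_i+R_i$, where $p_i$ is the degree-$d$ Taylor polynomial at $0$. By \eqref{eq:aw-reproduction}, $\tfrac12\int K_d\,g_i=g_i(0)+\tfrac12\int K_dR_i$.

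The term $g_i(0)$ is not exactly $(B^\dagger\vec c)_i$, because the center $c_i$ differs from $x_i$. I would handle this by expanding instead around the point $z_0=(x_i-c_i)/\rho_i$, which satisfies $|z_0|\leq h/\rho_i\leq 4h/r_i\leq1/4$. The cleanest route is to apply Cauchy's estimate on the disk $|z|\leq2$ to the Taylor remainder of $g_i$ about $0$, evaluated at both $u\in[-1,1]$ and at $z_0$. This gives $|R_i(u)|\leq M_i\sum_{j>d}|u|^j2^{-j}\leq 2M_i2^{-d}$ for $|u|\leq1$, where $M_i=\sup_{|z|\leq2}|g_i|$. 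The polynomial part then produces the difference $p_i(0)$ versus $g_i(z_0)$.

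This centering mismatch is the main obstacle. It is likely absorbed into the $\alpha_hh$ term: $|g_i(z_0)-g_i(0)|\leq |z_0|\sup|g_i'|$, and $\rho_i|z_0|\leq h$. The derivative bound comes from Cauchy's estimate, with a $1/\rho_i\lesssim\sqrt{q_i}/c$ factor. I would therefore move this piece into the $h$-term. There the factor $\sqrt{q_i}\leq\sqrt{3N}$ combines with the row bound to give the $\sqrt3/c$ and the extra $\sqrt N$, so that $\alpha_h\propto N$.

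The remainder contributes $\tfrac12\int|K_d||R_i|\leq C_0\cdot M_i2^{-d}$ up to constants. Summing the squares over $i$ and using the analytic matrix bound (applied pointwise in $z$, then integrated over $u$ via Cauchy--Schwarz) gives $\|T_{\mathrm{cont}}-B^\dagger\|\lesssim B_0C_0\sqrt N\,2^{-d}$.

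\emph{Discretization term.} The midpoint rule on $m_i$ cells of width $2/m_i$ has error at most $\tfrac{1}{m_i}\cdot\tfrac{2}{m_i}\cdot\|F'\|_\infty$ per unit length (a first-order bound suffices), where $F(u)=K_d(u)g_i(u)$. Since $F'=K_d'g_i+K_dg_i'$, and $1/m_i=h/(2\rho_i)$, the error is $\lesssim(h/\rho_i)(C_1M_i+C_0M_i')$. Here $M_i'\leq M_i$ follows from Cauchy's estimate on the radius-1 disk around $[-1,1]$, which stays inside $|z|\leq2$.

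The factor $1/\rho_i\leq4/r_i=4\sqrt{q_i}/c\leq4\sqrt{3N}/c$. Summing over rows with the matrix bound $B_0\sqrt N$ then yields $\|T-T_{\mathrm{cont}}\|\lesssim\frac{\sqrt3B_0}{c}(C_0+C_1)Nh$. The center-shift piece from the truncation term adds the remaining "$1$" in $(1+C_0+C_1)$.

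The only delicate point is keeping absolute constants aligned with the stated $\alpha$'s. I would not fight for exact constants; I would absorb small factors into $B_0$, which the lemma leaves as an unspecified absolute constant.
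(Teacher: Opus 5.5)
Your proposal is essentially the paper's proof. It uses the same three-way split: Taylor truncation via Cauchy on $|z|\le2$, the reproduction identity and $\|K_d\|_\infty\le C_0$; the center shift $c_i\to x_i$, which supplies the ``$1$'' in $\alpha_h$; and a row-wise Lipschitz midpoint bound with constant $B_0(C_0+C_1)$, assembled over rows with $\sum_i q_i\le 3N^2$.

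You do not need to enlarge $B_0$, because the stated coefficients come out exactly with three sharper steps. Sum the Taylor tail as $\sum_{j>d}2^{-j}=2^{-d}$. Use the sharp midpoint error $\|F'\|_\infty/(2m_i)$ together with $m_i>r_i/(2h)$. Bound the center shift by reapplying Lemma~\ref{lem:aw-mehler} at center $x_i$ with radius $r_i$, which gives a physical derivative bound $B_0/r_i$, rather than going through $1/\rho_i\le 4/r_i$.
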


Supplementary Section~\ref{S-app:approximation} bounds the three errors separately.
The coefficients satisfy $\alpha_d=O(\sqrt N(d+1)^2)$ and
$\alpha_h=O(N(d+1)^4)$, so logarithmic degree and inverse-polynomial
grid spacing suffice.

To seamlessly bridge our uniform grid samples with the continuous approximation, we define a rectangular averaging matrix $R:\mathbb{C}^L\to\mathbb{C}^N$. This matrix acts locally: it is zero everywhere outside the defined windows, while inside it applies the reproducing kernel weights:
\begin{equation}\label{eq:aw-rectangular}
 R[i,a_i+j] = \frac{\sqrt{v_i}\,K_d(u_{i,j})}{m_i\sqrt{h}}, \qquad D_i = \frac{1}{m_i}\sum_{j=0}^{m_i-1}K_d(u_{i,j})^2,
\end{equation}
where $D_i$ serves as the discrete normalization constant for the $i$-th root.

Recall that the grid addresses map exactly to our local coordinates via $y_{a_i+j} = c_i + \rho_iu_{i,j}$. This alignment allows $R$, when applied to the uniform grid sampler $J_L$, to perfectly reproduce our discrete approximation matrix $T$:
\begin{align}
    (R J_L)[i,k] &= \sum_{j=0}^{m_i-1} \left( \frac{\sqrt{v_i}K_d(u_{i,j})}{m_i\sqrt{h}} \right) \left( \sqrt{h}\varphi_k(y_{a_i+j}) \right) \\
    &= \frac{\sqrt{v_i}}{m_i} \sum_{j=0}^{m_i-1} K_d(u_{i,j}) \varphi_k(c_i + \rho_i u_{i,j}) \\
    &= T[i,k]\label{eq:aw-sampling-factorization}
\end{align}

Because the physical windows do not overlap, the rows of $R$ are strictly orthogonal. This allows for a direct formula for the overall operator norm:
\begin{equation}\label{eq:aw-row-normalization}
 RR^\dagger = \mathrm{diag}\left(\frac{v_iD_i}{2\rho_i}\right)_{i=1}^N, \qquad \|R\|^2 = \max_i\frac{v_iD_i}{2\rho_i}.
\end{equation}

The integral norm of the kernel gives a sharper normalization than its
pointwise bound. Orthogonality yields
\begin{equation}\label{eq:aw-kernel-energy}
 \begin{aligned}
 \frac12\int_{-1}^1K_d(u)^2\,du&=K_d(0),\\
 1\leq K_d(0)&\leq d+1.
 \end{aligned}
\end{equation}
To see the upper bound, use Legendre parity and
$P_{2r}(0)=(-1)^r\binom{2r}{r}/4^r$. Telescoping gives
$K_d(0)=(2r+1)^2\binom{2r}{r}^2/16^r$ for $r=\lfloor d/2\rfloor$.
The ratio of successive factors $\binom{2r}{r}/4^r$ is
$(2r-1)/(2r)$, so induction gives
$\binom{2r}{r}^2/16^r\leq1/(2r+1)$; the lower bound is the
$j=0$ term.

Since $K_d^2$ has derivative bounded by $2C_0C_1$, midpoint quadrature gives
$|D_i-K_d(0)|\leq C_0C_1/m_i\leq1/4$ whenever $m_i\geq4C_0C_1$.
Combining this estimate with $\rho_i>r_i/4$ and
$v_i\sqrt{q_i}\leq\overline C_W$ gives
\begin{equation}\label{eq:aw-discrete-norm}
 \begin{aligned}
 \frac34 K_d(0)&\leq D_i\leq d+\frac54,\\
 \|R\|^2&\leq\frac{2\overline C_W}{c}\left(d+\frac54\right).
 \end{aligned}
\end{equation}
Choose $\overline A$ as the least power of two satisfying
\begin{equation}\label{eq:aw-Abar}
 \begin{gathered}
 \overline A\geq\max\left\{2,
  2\sqrt{\frac{2\overline C_W}{c}\left(d+\frac54\right)}\right\},
 \\
 \overline A=O(\sqrt{d+1}).
 \end{gathered}
\end{equation}
This choice precedes the final grid size and ensures
$\overline A\geq2\|R\|$ on every admissible finer grid.

\subsection{Realization of the Averaging Windows Operator \texorpdfstring{$R$}{R}}

To realize $R$, prepare its normalized row states with equal preliminary success amplitudes, permitting a common amplification sequence.

For the certified disjoint dyadic windows above, define
\begin{equation}\label{eq:aw-window-state}
 \ket{\chi_i}=\sum_{j=0}^{m_i-1}
       \frac{K_d(u_{i,j})}{\sqrt{m_iD_i}}\ket j,
 \qquad
 W\ket{i-1}=\sum_{j=0}^{m_i-1}
       \frac{K_d(u_{i,j})}{\sqrt{m_iD_i}}\ket{a_i+j}.
\end{equation}

Disjoint supports and the definition of $D_i$ make $W$ an isometry.
We construct its root-to-grid extension and use the reversed circuit
to average grid samples.  The construction below prepares the signed
states, unloads their metadata, and erases the root label coherently.

\begin{lemma}[Charged preparation of the window isometry]
\label{lem:aw-row-state}
$W$ is an isometry with a reversible unitary extension.  For
$0<\zeta<1/2$, its extension and its inverse can be approximated to
operator error at most $\zeta$ using
\[
 O\bigl(N\operatorname{poly}(d,\log N,\log L,\log(1/\zeta))\bigr)
\]
logical gates and
$\operatorname{poly}(d,\log N,\log L,\log(1/\zeta))$ quantum width.
The same bounds hold for a unitary whose designated input-output block
is $R/\overline A$.
\end{lemma}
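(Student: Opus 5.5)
The plan is to prove the isometry claim directly, then build $W$ as a product of three reversible stages: load metadata, prepare a signed window state, and erase metadata. Finally I read off $R/\overline A$ from the inverse.

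\textbf{Isometry.} The supports $\{a_i,\dots,a_i+m_i-1\}$ are disjoint, because the physical cells are disjoint by Lemma~\ref{lem:local-weight}. The definition of $D_i$ gives $\|W\ket{i-1}\|^2=\frac1{m_iD_i}\sum_jK_d(u_{i,j})^2=1$. So the columns of $W$ are orthonormal, and $W$ is an isometry.

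\textbf{Stage 1: load metadata.} Starting from $\ket{i-1}\ket0\ket0$, a classically compiled lookup table (QROM) writes the data $(a_i,\log_2 m_i)$ into workspace. This takes $O(N\log L)$ gates and $O(\log L)$ width. The table is just a multiplexed sequence of $N$ controlled-constant writes, which is where the factor $N$ comes from.

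\textbf{Stage 2: prepare the signed state.} Conditioned on $m=m_i$, I prepare $\ket{\chi_i}$ on $\log_2 m_i$ qubits. The amplitude $K_d(u_{i,j})/\sqrt{m_iD_i}$ is a degree-$d$ polynomial in $u_{i,j}=(2j+1-m_i)/m_i$, which is an arithmetic function of $j$ and $m_i$. The map therefore depends on $i$ only through $m_i$, and only $O(\log L)$ distinct dyadic values of $m_i$ occur.

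The preparation uses a standard arithmetic-oracle method. Start with a uniform superposition over $j<m$, compute $K_d(u_j)$ to $O(\log(N/\zeta))$ bits using $O(d)$ Horner steps with polylog-cost arithmetic, and rotate an ancilla by $\arcsin(K_d(u_j)/C_0)$ with the sign absorbed into a phase. Then uncompute and amplify the ancilla-$\ket1$ branch. The success amplitude is $\sqrt{D_i}/C_0$, and by \eqref{eq:aw-discrete-norm} it is bounded below by $\tfrac{\sqrt3}{2}/(d+1)^2$.

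\textbf{Main obstacle: common amplitude amplification.} A single amplification schedule must work coherently for every $i$, yet the $D_i$ differ. I handle this the way the opening sentence of the subsection suggests. Add a second rotation on a flag qubit with angle depending on $m_i$ (through a table of $D$ values indexed by $m$ and $i$, or by loading $D_i$ from the QROM). This equalizes the preliminary success amplitude to a common known value $\beta=\Theta((d+1)^{-2})$. Then fixed-point or exact amplitude amplification with $O((d+1)^2)$ rounds makes the target branch exact up to the arithmetic error, and the garbage ancillas return to $\ket0$ up to $\zeta$. Every reflection is a reflection on ancillas or a conjugation by the preparation, so its cost is $\operatorname{poly}(d,\log L,\log(1/\zeta))$ per round. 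The per-$i$ lookups add $O(N\log L)$ per round, which keeps the total at $N\operatorname{poly}(\cdot)$.

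\textbf{Stage 3: erase metadata.} Add $a_i$ to the $j$-register to produce $\ket{a_i+j}$. Then erase the root label: for $y$ in window $i$, the labels $i$ and $(a_i,m_i)$ are determined by $y$. A lookup keyed on $y$ is too costly over $L$ addresses, so instead I run a binary search (comparator cascade) against the sorted interval starts $a_1<\dots<a_N$, which are held in QROM. This recovers $i$ coherently from $y$ with $O(N\log L)$ gates per lookup, and XOR-ing the result uncomputes $\ket{i-1}$ and the metadata. Errors add across the stages and are set to $\zeta$ by choosing the precision bits.

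\textbf{Block-encoding of $R/\overline A$.} By \eqref{eq:aw-rectangular} and \eqref{eq:aw-window-state}, $R=\diag(\sqrt{v_iD_i/(2\rho_i)})\,W^\dagger$, since $m_i\sqrt h=2\rho_i/\sqrt h\cdot\sqrt h$ after simplification. To realize it, apply $W^\dagger$ (the reversed circuit, with its flag indicating grid points outside all windows), then load $s_i=\sqrt{v_iD_i/(2\rho_i)}/\overline A\le1/2$ by the same root-indexed QROM. Follow this with a controlled rotation $\arcsin s_i$ on a fresh ancilla and uncompute. The designated block, with all ancillas at zero, is then $R/\overline A$, with the same error and resource bounds as the stages above.
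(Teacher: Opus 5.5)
Your proposal is correct and follows essentially the paper's route. The shared steps are an $O(N)$-record metadata scan, and a signed flag rotation whose row-dependent factor $1/\sqrt{D_i}$ makes the success amplitude equal across rows, so one exact Grover schedule works coherently. They continue with an address shift and a reverse interval lookup over the disjoint windows to erase the root label, and the factorization $R=\diag(\lambda_i)W^\dagger$ with a flag amplitude $\lambda_i/\overline A\leq 1/2$.

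The differences do not affect the stated $N\operatorname{poly}(d,\log N,\log L,\log(1/\zeta))$ bounds:
\begin{enumerate}
\item The paper folds your two rotations into a single amplitude $bK_d(u_{i,j})/\sqrt{D_i}$, with $b=\sin(\pi/(4t_0+2))$ chosen so that the amplification is exact. You should likewise take $\beta$ of this form rather than use fixed-point amplification, which is not exact.
\item The paper's sharper bound $|K_d|^2\leq K_d(0)(d+1)^2$ gives $O(d+1)$ amplification rounds rather than your $O((d+1)^2)$.
\item The paper erases the root label with a linear interval scan instead of your binary search.
\end{enumerate}
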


\begin{proof}
We start by showing a staged construction of the operator $W$, detailing how the system prepares local states, amplifies them, and reversibly uncomputes auxiliary data. To clarify the internal mechanics, each stage below tracks the exact evolution of the quantum state for a simplified running example: a single input root labeled $\ket{i-1}_{\text{root}}$, accompanied by initially empty auxiliary registers for metadata, local indices, the physical grid, and a success flag.
\begin{enumerate}
    \item \textbf{Metadata Loading}
    \begin{enumerate}
        \item Define $m_i = 2^{s_i}$ and allocate a shared quantum register of size $s_{\max} = \max_i s_i$ to hold the local index $j$.
        \item Load the hardwired starting address $a_i$, length $s_i$, and normalization constant $D_i$ by sequentially scanning the $N$ classical records. Each scan computes equality to a label, conditionally XORs its record, and clears the comparison flag; its $O(N)$ times polynomial record-length gate cost is included.

        \item \textbf{Output State:} The root register acts as a control to pull classical data into the active metadata registers.
        \begin{equation}
            \ket{i-1}_{\mathrm{root}}\ket{a_i,s_i,D_i}_{\mathrm{meta}}
            \ket0_{\mathrm{local}}\ket0_{\mathrm{grid}}\ket0_{\mathrm{flag}}
        \end{equation}
    \end{enumerate}
 \item \textbf{Local State Preparation}
    \begin{enumerate}
        \item Apply Hadamard gates to precisely the lowest $s_i$ bits, leaving the higher bits zero, to generate a uniform superposition over valid local addresses $0 \leq j < m_i$.
        \item Reversibly compute the local weight, rotate an auxiliary flag qubit to encode the target amplitude
        \begin{equation}\label{eq:aw-signed-amplitude}
            f_i(j)=\frac{bK_d(u_{i,j})}{\sqrt{D_i}}
        \end{equation}
        and immediately uncompute the arithmetic to clear the workspace. Set
        $t_0=4(d+1)$ and $b=\sin(\frac{\pi}{4t_0+2})$.
        Cauchy--Schwarz gives
        $|K_d(u)|^2\leq K_d(0)\sum_{j=0}^d(2j+1)P_j(u)^2\leq K_d(0)(d+1)^2$.
        Thus \eqref{eq:aw-discrete-norm} and $b<1/[4(d+1)]$ imply
        $|f_i(j)|<1/(2\sqrt3)<1/2$, leaving fixed slack for the rotation.
        For invalid addresses or malformed metadata define the amplitude to be zero; clamp it to $[-1/2,1/2]$ on the remaining register states. These reversible bounded extensions agree with the specified amplitudes on loaded valid records and define a uniformly conditioned rotation on the full register space.
        \item \textbf{Output State:} The local register is placed in a uniform superposition, with total success probability $b^2$.
      \begin{equation}
          \begin{aligned}
          &\ket{i-1}_{\mathrm{root}}\ket{a_i,s_i,D_i}_{\mathrm{meta}}
          \ket0_{\mathrm{grid}}\\
          &\quad\otimes\frac{1}{\sqrt{m_i}}\sum_{j=0}^{m_i-1}
          \ket j_{\mathrm{local}}
          \left(f_i(j)\ket{\mathrm{good}}_{\mathrm{flag}}
          +\sqrt{1-f_i(j)^2}\ket{\mathrm{bad}}_{\mathrm{flag}}\right).
          \end{aligned}
      \end{equation}
    \end{enumerate}

    \item \textbf{Reduction to Exact Grover Search}
    \begin{enumerate}
        \item Let $A$ be the row-controlled state-preparation circuit, and use the Grover-type sequence
        {\begin{equation}\label{eq:aw-inner-amplification}
          (-A R_0 A^\dagger R_g)^{t_0}A.
        \end{equation}}
        Here $R_g=\id-2\Pi_g$ flips the good flag and $R_0=\id-2\Pi_0$ reflects about zero preparation registers, leaving the root label and loaded metadata untouched.
        \item Because the initial amplitude was artificially shrunk by the specific parameter $b$, this becomes an \textit{exact} Grover search. Indeed, every row has good component $b\ket{\chi_i}\ket{\mathrm{good}}$. The common angle $\theta_0=\pi/(4t_0+2)$ rotates to $(2t_0+1)\theta_0=\pi/2$ with the same positive phase in every row \cite[Sec.~2.1]{brassard2000}. The scalar minus in the reflection product fixes this phase.
        \item It perfectly isolates the target state $|\chi_i\rangle$ in the ideal circuit. Arithmetic registers are zero because the preparation explicitly uncomputes them; the metadata remains loaded. This direct sum of equal-angle rotations also works on root-label superpositions.
        \item \textbf{Output State:} The target state $\vert{}\chi_i\rangle$ is cleanly isolated in the local register, the arithmetic memory is empty, and the success flag achieves exactly $100\%$ probability.
        \begin{equation}
            \ket{i-1}_{\mathrm{root}}\ket{a_i,s_i,D_i}_{\mathrm{meta}}
            \ket{\chi_i}_{\mathrm{local}}\ket0_{\mathrm{grid}}
            \ket{\mathrm{good}}_{\mathrm{flag}}
        \end{equation}
    \end{enumerate}

    \item \textbf{Address Mapping and Reversible Erasure}
    \begin{enumerate}
        \item Shift the local index $j$ into an absolute grid position by computing $\ell=a_i+j$ into a new output register. XOR the computed value $\ell-a_i$ into the local register and reverse the subtraction arithmetic. Then reverse the metadata-loading scan while the controlling root label is still present. The local and metadata registers are now both zero.
        \item To safely discard the initial root label $i-1$ without collapsing the quantum superposition, exploit the fact that the physical windows are strictly disjoint. Evaluate a piecewise reverse-lookup function $\lambda(\ell) = \lambda(a_i+j) = i-1$ via a sequential scan, which uniquely identifies the root associated with any valid grid position. Set $\lambda(\ell)=0$ outside all windows. Each interval comparison conditionally XORs its hardwired label and uncomputes its test, so this defines a reversible operation on every grid input.
        \item XOR this computed reverse-lookup value against the original input register. Because $\lambda(\ell)$ exactly equals $i-1$, this cleanly subtracts the label from itself, restoring the root register to $|0\rangle$ and isolating the final grid amplitudes.
        \item \textbf{Output State:} The metadata, local index, and initial root label are perfectly uncomputed, transferring all window-state amplitudes solely into the active grid register.
        \begin{equation}
            \ket0_{\mathrm{root}}\ket0_{\mathrm{meta}}\ket0_{\mathrm{local}}
            \otimes\left(\sum_{j=0}^{m_i-1}
            \frac{K_d(u_{i,j})}{\sqrt{m_iD_i}}\ket{a_i+j}_{\mathrm{grid}}\right)
            \ket{\mathrm{good}}_{\mathrm{flag}}
        \end{equation}
    \end{enumerate}

\end{enumerate}

Toggle the now fixed good flag to zero. Each address operation uses $O(\log L)$ bits; comparisons and arithmetic reuse their workspace. Reversing this complete circuit gives the inverse of the clean extension of $W$.

Having successfully constructed the isometry $W$, the circuit proceeds to use it to synthesize the complete adjoint averaging matrix $R^\dagger$. The scaled operator $R^\dagger/\overline A$ is a contraction, so it can be embedded as a block of a larger unitary. The unscaled $R^\dagger$ need not shrink amplitudes.

We achieve this by algebraically decomposing the target matrix into two distinct operations:
{\begin{equation}\label{eq:aw-R-factorization}
 R^\dagger=W\operatorname{diag}(\lambda_i),\qquad
 \lambda_i=\sqrt{\frac{v_iD_i}{2\rho_i}}.
\end{equation}}
This decomposition is proven in the supplemental Sec.~\ref{S-sec:R_decomposition}.

\begin{itemize}
\item \textbf{Step 1: The Diagonal Scaling ($\mathrm{diag}(\lambda_i)$):}
On root label $i-1$, a separate flag rotation has success amplitude $\lambda_i/\overline A\leq1/2$. Load the row data, compute and apply the rotation, reverse its arithmetic, and unload the data while the root label is still present. Assign success to the zero flag state. The rotation angle is $2\arcsin(\lambda_i/\overline A)$ in the usual $R_y(\theta)=e^{-i\theta Y/2}$ convention, with a fixed flag relabeling if needed.
\item \textbf{Step 2: The Spatial Mapping ($W$):}
With the local quadrature scaling successfully encoded into the flag qubit's probability amplitude, the circuit executes the exact forward sequence of $W$ (Stages 1 through 4) to spatially distribute the root state across the physical grid. The separate row flag is left untouched by this sequence, including its inner amplification.
\end{itemize}

Through block encoding, the combined effect of scaling the flag and applying $W$ successfully embeds the target matrix into the overall quantum state. The block from root inputs with zero workspace to grid outputs with zero workspace and successful row flag is $R^\dagger/\overline A$. Reversing the complete circuit and interchanging these designated subspaces gives $R/\overline A$ on arbitrary grid inputs.

All flag amplitudes use bounded full-space extensions, as in the local preparation, and stay away from $\pm1$. Thus amplitude error $\xi$ gives angle error $O(\xi)$. Polynomial evaluation, reciprocal square roots, and bounded-angle evaluation require polynomially many bits in $d,\log N,\log L,\log(1/\zeta)$. In particular, the Legendre recurrence bounds the absolute coefficient sum of $P_j$ by $3^j$, so that of $K_d$ is at most $C_0\,3^d$; logarithmically many guard bits in this bound suffice. Reversible arithmetic and fixed-phase Clifford+$T$ synthesis \cite{ross2016}, with error allocated per gate occurrence, approximate the full unitary. Every inverse call reverses the actual compiled circuit. Telescoping includes the $O(d+1)$ inner amplification calls and all output registers.

This yields a final logical gate count and total quantum width of
{\[
 O\bigl(N\operatorname{poly}(d,\log N,\log L,\log(1/\zeta))\bigr)
 \quad\text{and}\quad
 \operatorname{poly}(d,\log N,\log L,\log(1/\zeta)),
\]}
respectively. A rigorous bounding of the polynomial evaluation, arithmetic overhead, and unitary synthesis complexity is provided in Supplementary Section~\ref{S-app:finite_precision}.

\end{proof}

\subsection{Complete amplification and choice of precision}

Clean implementation also generates amplitudes outside the designated
output subspace.  The following two-reflection amplification estimate
controls the full output with fixed phase; for projected robust
amplification, see \cite{gilyen2018}.

The input and output registers can have different layouts, so we use separate clean embeddings. Padding with zero registers and fixed register permutations put both layouts in one circuit space.

\begin{lemma}[Fixed-phase amplification with all output registers]
\label{lem:clean-amplification}
Let $Q$ be an $N\times N$ unitary. Let $U_0$ be a unitary with designated rank-$N$ input and output embeddings $E_{\mathrm{in}}$ and $E_{\mathrm{out}}$ satisfying
{\begin{equation}
 E_{\mathrm{out}}^\dagger U_0 E_{\mathrm{in}}
 =\widetilde T/A,\qquad
 A\geq2,\qquad \|\widetilde T-Q\|\leq\delta\leq1/24.
\end{equation}}
By utilizing $O(A)$ alternating applications of $U_0$ and $U_0^\dagger$, alongside reflections about the designated subspaces and one additional qubit, we can construct an amplified unitary $U_{\mathrm{amp}}$ satisfying
{\begin{equation}\label{eq:aw-full-amplification}
 \|U_{\mathrm{amp}}E_{\mathrm{in}}-E_{\mathrm{out}}Q\|
 \leq(\pi+2)\delta.
\end{equation}}
The embeddings in the conclusion include the extra qubit in its designated fixed state.
\end{lemma}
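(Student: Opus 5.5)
The plan is to follow the standard robust (fixed-point-free) amplitude amplification analysis, but to account for the non-unitarity of $\widetilde T$ explicitly. First, I would reduce to the exact case. By polar decomposition, or simply by Weyl's inequality, every singular value of $\widetilde T$ lies in $[1-\delta,1+\delta]$. Write $\widetilde T = Q + \Delta$ with $\|\Delta\|\le\delta$.

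The core construction uses one extra qubit to shrink the success amplitude to an exactly "Grover-friendly" value. Choose an integer $t=O(A)$ and set $\theta=\pi/(4t+2)$. Then $\sin\theta\le 1/A$, so I can define $\beta=A\sin\theta\le1$. Apply a fixed single-qubit rotation $R_y$ on the extra qubit with amplitude $\beta$ on its designated state. Define $U_1=U_0\otimes R_\beta$. Its designated block is then $\widetilde T\sin\theta$, where the common factor $\sin\theta$ multiplies an operator that is $\delta$-close to unitary.

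Next, run the sequence $(-U_1 R_{\mathrm{in}} U_1^\dagger R_{\mathrm{out}})^{t}U_1$, where the reflections are about the designated input and output subspaces, including the extra qubit. I would analyze this sequence with Jordan's lemma, or equivalently the singular value decomposition of the block. Writing $\widetilde T=\sum_j s_j\ket{\mathrm{out}_j}\bra{\mathrm{in}_j}$, the circuit decomposes into invariant two-dimensional subspaces. In each subspace the initial amplitude is $s_j\sin\theta=\sin\theta_j$. After $t$ rounds the designated output component is $\sin((2t+1)\theta_j)$, with the same fixed sign as in the exact case, and the orthogonal leakage is $\cos((2t+1)\theta_j)$. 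This is the standard quantum singular value transformation by Chebyshev-type polynomials.

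For the error estimate I would split the total error into two parts. The first is the deviation of the ideal output $\ket{\mathrm{out}_j}\bra{\mathrm{in}_j}$ map $\sum_j\ket{\mathrm{out}_j}\bra{\mathrm{in}_j}$, the unitary polar factor $V$ of $\widetilde T$, from $Q$. This deviation is at most $2\delta$: one has $\|V-\widetilde T\|\le\delta$ because the singular values are within $\delta$ of $1$, and $\|\widetilde T-Q\|\le\delta$. The second is the per-subspace amplitude error. From $|\theta_j-\theta|\le \pi\theta\delta/2$, obtained from $\arcsin$ Lipschitz bounds near the small angle $\theta$ with $\delta\le1/24$, the phase $(2t+1)\theta_j$ differs from $\pi/2$ by at most roughly $\pi\delta/2\cdot(\text{factor})$. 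This bounds both the in-block deficit $1-\sin$ and the leakage $|\cos|$ by about $\pi\delta$ in total norm. Combining the parts gives the bound $(\pi+2)\delta$, and the leakage must be included because the conclusion is stated on the full output space.

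I expect the main obstacle to be the careful control of the leakage term. The SVD bases must be handled so that the leaked components, living in orthogonal complements, combine in operator norm and not merely per vector, and the constant must come out as $\pi$. I would handle this by noting that the leaked vectors $U_1$-images lie in mutually orthogonal Jordan blocks, so the total error operator is a direct sum, whose norm is a maximum over $j$. It then only remains to check the scalar inequality $|e^{i\cdot}\text{-deviation}|\le\pi\delta$ on each block, using $\delta\le 1/24$ to keep $(2t+1)\theta_j$ in a range where the linearization is valid.
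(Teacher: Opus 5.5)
Your proposal is correct and follows essentially the same route as the paper: a coin qubit with good amplitude $A\sin\theta$ scales the block to $\sin\theta\,\widetilde T$ with $\theta=\pi/(4t+2)$. The sequence $(-U R_{\mathrm{in}}U^\dagger R_{\mathrm{out}})^tU$ then acts as a rotation on mutually orthogonal two-dimensional planes built from the singular value decomposition of $\widetilde T$, with at most $\pi\delta$ full-output deviation per plane (leakage included) plus $2\delta$ from the polar factor. The only differences are cosmetic: the paper verifies orthonormality of the leakage vectors by an explicit Gram-matrix computation instead of citing Jordan's lemma, and uses the slope bound $2s$ for $\arcsin(s\sigma)$. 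Your looser constant $\pi\theta/2$ still gives an angular deviation of $\pi^2\delta/4<\pi\delta$.
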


The full-output and fixed-phase estimate, including the component outside the clean output subspace, is proved in Supplementary Section~\ref{S-app:amplification}. It gives a coherent implementation without postselection.

To supply the input data for our amplified transform, we use the uniform-grid sampler of Supplementary Theorem~\ref{S-thm:uniform-grid}. Based on the grid definitions in Eq.~\eqref{eq:uni_grid_def}, for any target preparation error $0<\tau\leq1/10$ and every power-of-two grid dimension $L\geq C(N+1)^3\tau^{-8}$ (where $C$ is an absolute constant), this subroutine provides a coherent sampling unitary $V_L$ satisfying
\begin{equation}\label{eq:aw-grid-interface}
\|V_L E_{\mathrm{in}}-E_{\mathrm{grid}}J_L\|\leq\tau.
\end{equation}
Here, $E_{\mathrm{grid}}$ embeds the $L$-dimensional sample register with all sampler work registers zero. This operation is efficient---requiring a circuit cost and width polynomial in $\log N$, $\log L$, and $\log(1/\tau)$---and maintains fixed column phases.
The norm includes every output register; it therefore allows small residual workspace amplitudes. The tolerance also accounts for $J_L$ not being exactly isometric. The theorem permits a larger dyadic grid, allowing us to satisfy the window-resolution conditions simultaneously.

The remaining choices separate naturally into two tasks: make the analytic and sampling errors small, then compile the resulting amplified circuit.

\begin{lemma}[Compatible parameter choice]\label{lem:param-choice}
For any $N\geq 2$ and $0<\varepsilon\leq 1/2$, there exist admissible choices for the polynomial degree $d$, grid size $L$, and precision tolerances such that all construction requirements are satisfied simultaneously. Specifically,
{\begin{gather}
 d=O(\log(N/\varepsilon)),\qquad
 \overline A=O(\sqrt{d+1}),\qquad
 L=\operatorname{poly}(N,1/\varepsilon),\label{eq:aw-compatible-size}\\
 E_d(h)\leq\varepsilon/24,\qquad
 0<\tau\leq1/10,\qquad
 \overline A\tau\leq\varepsilon/24.\label{eq:aw-compatible-errors}
\end{gather}}
The certified windows lie inside the grid, are pairwise disjoint, satisfy \eqref{eq:aw-window-choice} and $m_i\geq4C_0C_1$, and obey $\overline A\geq2\|R\|$. The normalization $\overline A$ is fixed independently of the final grid refinement.
\end{lemma}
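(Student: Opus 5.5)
The plan is to fix the parameters in a strict order, so that each one depends only on those chosen before it. The order is $d$, then $\overline A$, then $\tau$, then $h$ (equivalently $L$), and finally the windows.

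First choose the degree. Take $d$ as the least integer with $\alpha_d2^{-d}\leq\varepsilon/48$. Since $\alpha_d=B_0\sqrt N(d+1)^2$, it suffices to have $2^{d}\geq 48B_0\sqrt N(d+1)^2/\varepsilon$. Absorbing the $(d+1)^2$ factor costs only a $\log\log$ term, so $d=O(\log(N/\varepsilon))$.

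Next, with $d$ fixed, define $\overline A$ by \eqref{eq:aw-Abar}. This uses only $d$, $c$ and the certified constant $\overline C_W$, which Lemma~\ref{lem:local-weight} bounds by an absolute constant. Hence $\overline A=O(\sqrt{d+1})$, and it does not depend on $h$ or $L$. The requirement $\overline A\geq 2\|R\|$ then follows from \eqref{eq:aw-discrete-norm}, whose hypotheses are $\rho_i>r_i/4$ and $m_i\geq4C_0C_1$; both will be secured when the windows are chosen below.

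With $\overline A$ fixed, set $\tau=\min\{1/10,\varepsilon/(24\overline A)\}$, which gives \eqref{eq:aw-compatible-errors} for $\tau$.

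Now choose the grid, which is the step needing most care because several lower bounds on $L$ must be met at once. Recall $h=\sqrt{2\pi/L}$, so making $L$ a large enough power of two meets each requirement:
\begin{itemize}
\item[(i)] $\alpha_hh\leq\varepsilon/48$. This needs $L\gtrsim N^2(d+1)^8/\varepsilon^2$, which together with the choice of $d$ yields $E_d(h)\leq\varepsilon/24$.
\item[(ii)] $L\geq C(N+1)^3\tau^{-8}$, as the sampler \eqref{eq:aw-grid-interface} requires. This is $\operatorname{poly}(N,1/\varepsilon)$ because $\overline A=O(\sqrt{\log(N/\varepsilon)})$.
\item[(iii)] $h\leq\min_ir_i/16$, together with a window-resolution condition: the largest even power of two $m_i$ with $m_ih/2\leq r_i$ must satisfy $m_i\geq 4C_0C_1=4(d+1)^6$.
\end{itemize}
Lemma~\ref{lem:local-weight} gives $q_i\leq3N$, hence $r_i\geq c/\sqrt{3N}$. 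So (iii) holds once $h\leq c/(32\sqrt{3N}(d+1)^6)$, which is again polynomial in $N$ and $\log(1/\varepsilon)$.

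With the grid fixed, construct each window. Dyadic rounding yields $m_i$ with $\rho_i=m_ih/2\in(r_i/4,r_i]$; the largest power of two not exceeding $2r_i/h$ lies above half of it, so the interval is not empty. Then choose $a_i$ as the integer nearest to $x_i/h-(m_i-1)/2$, which gives $|c_i-x_i|\leq h/2\leq h$.

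It remains to check that the windows are disjoint and fit inside the grid. For disjointness, adjacent windows have radii plus center offsets at most $r_i+r_{i+1}+2h\leq(2c+c/8)g_i<g_i$, by \eqref{eq:aw-local-gap} and $h\leq r_i/16$. For containment, note $|x_i|\leq\sqrt{2N+N^{1/3}}$ because $q_i>0$, so every window lies within $[-2\sqrt N,2\sqrt N]$. The grid covers $[-hL/2,hL/2)$ with $hL/2=\sqrt{\pi L/2}$, which exceeds $2\sqrt N+1$ once $L\geq 8N$; this is already implied by (ii).

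Finally, the dyadic quantities must be computable. Certified root intervals of polynomial bit length determine the rounded $m_i$ and $a_i$, so take the root enclosures fine enough, well below $h/2$. This completes the choice and verifies every condition of the lemma.
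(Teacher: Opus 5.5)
Your proposal is correct and is essentially the paper's proof: you fix $d$ by $\alpha_d2^{-d}\le\varepsilon/48$, then $\overline A$ and $\tau=\varepsilon/(24\overline A)$ independently of the grid, and only afterwards refine $h$ (hence $L$) to meet the sampler floor, the $r_i/16$ and $m_i\ge4C_0C_1$ resolution conditions, $\alpha_hh\le\varepsilon/48$, and grid containment, with $\overline A\ge2\|R\|$ inherited from the grid-independent norm bound on $R$. The other differences are cosmetic: you take the nearest-integer $a_i$, and you check containment via $L\ge8N$ rather than the paper's $Lh/2\ge\sqrt{2N}+2$. One point to tighten is that when $r_i$ is known only through an enclosure, you should choose $m_i$ with a margin inside the factor-four interval rather than as the exact largest admissible power of two, which your stronger bound on $h$ already permits.
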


\begin{lemma}[Finite-precision compilation]\label{lem:finite-compilation}
Use the parameters of Lemma~\ref{lem:param-choice}. Compose the actual sampler $V_L$ with the exact window block of Lemma~\ref{lem:aw-row-state}, and apply the amplification construction of Lemma~\ref{lem:clean-amplification} with $A=\overline A$. Compiling the window operations and additional coin rotation changes this complete circuit by at most $\varepsilon/2$ in full-unitary operator norm. The resulting Clifford+$T$ circuit uses
{\[
 O\bigl(N\operatorname{polylog}(N,1/\varepsilon)\bigr)
 \quad\text{logical gates and}\quad
 \operatorname{polylog}(N,1/\varepsilon)
 \quad\text{logical qubits}.
\]}
The sampler circuit is identical in the comparison and compiled sequences, and every inverse is the reverse of the actual forward circuit. Classical certification and circuit generation take polynomial time in $N,1/\varepsilon$ and are charged separately.
\end{lemma}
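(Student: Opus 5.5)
The plan is to compare two circuits with the same gate skeleton. The \emph{reference} circuit is the amplification construction of Lemma~\ref{lem:clean-amplification} with $A=\overline A$ and $U_0=\mathcal W(V_L\otimes\id)$. Here $\mathcal W$ is the exact unitary whose designated block is $R/\overline A$, from Lemma~\ref{lem:aw-row-state}. The \emph{compiled} circuit replaces every occurrence of $\mathcal W$ or $\mathcal W^\dagger$, and the extra coin rotation, by Clifford+$T$ approximations. Everything else is kept verbatim: the sampler $V_L$ and its reverse, and the reflections about the designated subspaces.

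\textbf{Step 1: write both circuits as products.} Write each circuit as a product $U_1\cdots U_k$ of $k=O(\overline A)$ blocks. Each block is one of the following: a sampler call, a window call, a reflection, or the coin rotation.

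\textbf{Step 2: telescope the error.} For unitaries, the hybrid argument gives
\[
 \|U_1\cdots U_k-U_1'\cdots U_k'\|\leq\sum_{j=1}^k\|U_j-U_j'\|.
\]
This uses the full operator norm on the whole circuit space, so it automatically includes every auxiliary register.

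\textbf{Step 3: check which blocks contribute.} The sampler blocks coincide in the two sequences and contribute $0$. The reflections about zero-initialized registers and the good/flag projections are multi-controlled $Z$ gates. These are exactly synthesizable in Clifford+$T$ with reused ancillas, so they also contribute $0$. Each window call or inverse call contributes at most $\zeta$, since its inverse is the reverse of the actual compiled forward circuit. The coin rotation, compiled by fixed-phase synthesis \cite{ross2016} to accuracy $\zeta$, contributes at most $\zeta$ per use.

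\textbf{Step 4: choose the per-call accuracy.} The total error is at most $\zeta\cdot O(\overline A)$. I take $\zeta=\varepsilon/(2M)$, where $M=O(\overline A)=O(\sqrt{d+1})$ bounds the number of non-exact blocks. This gives the claimed $\varepsilon/2$, and
\[
 \log(1/\zeta)=O\bigl(\log(1/\varepsilon)+\log\log(N/\varepsilon)\bigr).
\]
The coin angle is a classically computed constant determined by $\overline A$. It only needs $O(\log(1/\zeta))$ bits, and that classical cost is charged separately.

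\textbf{Step 5: count gates and qubits.} Substitute the parameters of Lemma~\ref{lem:param-choice}: $d=O(\log(N/\varepsilon))$ and $L=\operatorname{poly}(N,1/\varepsilon)$, so $\log L=O(\log(N/\varepsilon))$. Then each window call costs
\[
 O\bigl(N\operatorname{poly}(d,\log N,\log L,\log(1/\zeta))\bigr)=O\bigl(N\operatorname{polylog}(N,1/\varepsilon)\bigr).
\]
Multiplying by $O(\overline A)=O(\sqrt{\log(N/\varepsilon)})$ calls keeps this bound. The sampler costs $\operatorname{poly}(\log N,\log L,\log(1/\tau))$ per call. Since $\overline A\tau\leq\varepsilon/24$, we have $\log(1/\tau)=O(\log(1/\varepsilon)+\log\log N)$, so the $O(\overline A)$ sampler calls cost only $\operatorname{polylog}(N,1/\varepsilon)$ in total. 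The reflections cost $O(\operatorname{polylog})$ each, and the coin costs $O(\log(1/\zeta))$ gates. For width, the blocks run sequentially and each returns its workspace to a fixed state up to the error already counted. So the width is the maximum of the component widths plus the one extra amplification qubit, which is $\operatorname{polylog}(N,1/\varepsilon)$.

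\textbf{Main obstacle.} The delicate point is that the per-block errors must really be full-unitary, fixed-phase errors. Otherwise the telescoping bound fails: a global phase mismatch in $\mathcal W$ would not cancel between forward and reverse calls inside the amplification. It would also break the exact-angle structure that Lemma~\ref{lem:clean-amplification} relies on. I would handle this first by appealing to the statement of Lemma~\ref{lem:aw-row-state}, which bounds the \emph{extension} in operator norm, together with the fixed-phase synthesis of \cite{ross2016}. I would then verify that no step of the amplification uses a compiled gate only through its designated block. If some step did, I would recompile that gate as a whole unitary so the bound applies to it too.
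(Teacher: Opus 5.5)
The error and resource part of your argument follows the paper's proof almost step for step:
\begin{itemize}
\item the same comparison circuit, with the identical compiled $V_L$ and an exact window unitary and coin;
\item full-unitary telescoping, where sampler occurrences contribute nothing because every inverse is the exact reverse of the same compiled circuit;
\item exactly synthesized reflections;
\item per-occurrence accuracy $\zeta$ for the window unitary and the coin;
\item $O(\sqrt{d+1})$ block calls, each costing $O(N\operatorname{polylog}(N,1/\varepsilon))$.
\end{itemize}
Your choice $\zeta=\varepsilon/(2M)$, with $M$ counting all inexact sub-blocks, is the paper's $\zeta=\varepsilon/(4M)$ with $M=2t+1$ block calls and error $2\zeta$ per call.

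One remark in your ``main obstacle'' paragraph is off. A pure global phase $e^{i\phi}$ on the compiled window unitary does cancel in every pair $U R_{\mathrm{in}} U^\dagger$, and it does not alter the rotation angles. It survives only on the final unpaired call. There it matters because the contract is fixed-phase. Your remedy, per-block fixed-phase operator-norm bounds, is correct regardless.

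The gap is the statement's final sentence. You never argue that classical certification and circuit generation run in time polynomial in $N,1/\varepsilon$; you only note that the coin angle is a classically computed constant. The circuit hardwires $O(N)$ records: $a_i$, $s_i$, $D_i$, approximations to $\lambda_i/\overline A$, and the bound $\overline C_W$. Their correctness requires certified enclosures of the roots of $H_N$ and of the weights $v_i=w_ie^{x_i^2}$. The paper supplies this as follows:
\begin{itemize}
\item $H_N$ is an integer polynomial with $O(N\log N)$-bit coefficients, so real-root isolation is polynomial-time.
\item The gap bound $g_i\geq q_i^{-1/2}\geq(3N)^{-1/2}$ of Lemma~\ref{lem:local-weight} and the dyadic margins in the window choice (radius in $(r_i/4,r_i]$, center within $h$ of the root) mean only polynomially many refinement bits are needed.
\item The weights are evaluated in interval arithmetic, with polynomially many guard bits for the Gaussian factor.
\item Each $D_i$ is computed exactly as a rational. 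Writing $K_d=P/Q$, its common denominator divides $Q^2m_i^{2d+1}$. Together with the Legendre recurrence, $D_i\leq d+5/4$, and $m_i\leq L=\operatorname{poly}(N,1/\varepsilon)$, this keeps all bit lengths polynomial.
\end{itemize}
Only the short certified records enter the online circuit, which is also what justifies the per-call cost you import from Lemma~\ref{lem:aw-row-state}. Adding this argument would complete your proof.
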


The proofs of these two lemmas are in Supplementary Sections~\ref{S-sec:param_choice} and \ref{S-sec:precision_compilation}.

\begin{proof}[Proof of Theorem~\ref{thm:near-linear}]
We begin by selecting compatible parameters according to Lemma~\ref{lem:param-choice}. We compose the uniform-grid unitary $V_L$ in \eqref{eq:aw-grid-interface} with the ideal block for $R/\overline A$ from Lemma~\ref{lem:aw-row-state}.
The comparison circuit uses exact window rotations but the actual sampler at accuracy $\tau$. Its good output subspace requires zero sampler and window workspace and a successful row flag. Write its block as $\widetilde T/\overline A$. On ideal grid samples this block would be $RJ_L/\overline A=T/\overline A$ by \eqref{eq:aw-sampling-factorization}. Unitary composition and projection cannot increase the sampler error, so
{\begin{equation}\label{eq:aw-good-error}
 \|\widetilde T-B^\dagger\|
 \leq E_d(h)+\overline A\tau=:\delta
 \leq\varepsilon/12\leq1/24.
\end{equation}}
Padding and fixed register permutations supply the two rank-$N$ clean embeddings. Apply Lemma~\ref{lem:clean-amplification} with $Q=B^\dagger$ and $A=\overline A$. The complete amplified output is within $(\pi+2)\delta<\varepsilon/2$ of $E_{\mathrm{out}}B^\dagger$. Lemma~\ref{lem:finite-compilation} contributes at most $\varepsilon/2$ further error, proving \eqref{eq:aw-main-contract} with the stated gate count and width. Tensoring with an identity includes any external reference system.

Finally, multiplying~\eqref{eq:aw-main-contract} on the left by $U_{N,\varepsilon}^\dagger$ and on the right by $B$ gives
{\begin{equation}
 \|U_{N,\varepsilon}^\dagger E_{\mathrm{out}}
       -E_{\mathrm{in}}B\|\leq\varepsilon.
\end{equation}}
This confirms the inverse transform assertion using identical circuit resources.
\end{proof}

\section{Comparison to Known Implementations of the Transform}\label{rc:comparison}
We compare Theorem~\ref{thm:near-linear} against existing complete-transform synthesis methods, targeting the transform $B$ with a clean-input operator-norm error bounded by $\varepsilon$.
Here $\widetilde O$ suppresses factors polynomial in $1+\log N+\log(1/\varepsilon)$.

For our baselines, we first consider the complete reflection construction by Pli\'s and Zak, which requires quadratic total gates with logarithmic precision overhead \cite{plis2025}. We also compare against the unitary synthesis framework of Low, Kliuchnikov, and Schaeffer \cite{low2024trading}. While their parameterization can achieve an $\widetilde{O}(N^{3/2})$ $T$-gate count by utilizing $\widetilde{O}(\sqrt{N})$ auxiliary width, the online coherent-lookup Clifford gates still give a quadratic $\widetilde{O}(N^2)$ total gate bound \cite{low2024trading}.

Here $\lambda$ follows Sections~2--3 of~\cite{low2024trading}: it counts $b$-bit lookup blocks and uses $b\lambda$ borrowed qubits, with $b=O(\log(N/\varepsilon))$. The lookup requires $O(bN)$ online Clifford gates per reflection-state preparation; the $N$-reflection synthesis therefore has a quadratic total-gate upper bound. Classical preprocessing is a separate cost.

Table~\ref{rc:table} summarizes these upper bounds. By exploiting adaptive windows, our approach reduces the total gate complexity to near-linear $\widetilde{O}(N)$ while maintaining strictly polylogarithmic width. For positive quantities, $f\asymp g$ means $cg\leq f\leq Cg$ for constants $c,C>0$ independent of the varying parameters.

\begin{table}[ht]
\centering
\begin{tabular}{@{}>{\raggedright\arraybackslash}p{0.36\linewidth}ccc@{}}
\toprule
Construction & Online $T$ gates & Online total gates & Total width\\
\midrule
Adaptive windows, Theorem~\ref{thm:near-linear}
 & $\widetilde O(N)$ & $\widetilde O(N)$ & polylog\\[3pt]
Low--Kliuchnikov--Schaeffer, $\lambda=1$
 & $\widetilde O(N^2)$ & $\widetilde O(N^2)$ & polylog\\[3pt]
Low--Kliuchnikov--Schaeffer, $\lambda\asymp\sqrt N$
 & $\widetilde O(N^{3/2})$ & $\widetilde O(N^2)$
 & $\widetilde O(\sqrt N)$\\
\bottomrule
\end{tabular}
\caption{Upper bounds for the complete finite transform. All auxiliary qubits, including borrowed dirty qubits, contribute to width.}
\label{rc:table}
\end{table}

\section*{Conclusion and further questions}
\phantomsection
\addcontentsline{toc}{section}{Conclusion and further questions}

By dynamically matching the averaging width to the local effective quadrature weight, we establish a near-linear finite-transform circuit with strictly polylogarithmic width. Our continuous-input analysis isolates the underlying circuit error from truncation and sampled-tail aliasing.

The resource bounds are asymptotic upper bounds for this construction. Practical performance comparisons require a concrete compiler with certified numerical constants and gate counts.

The oscillator and fractional Fourier examples illustrate operations diagonal in the retained Hermite space. Using them for hardware characterization will additionally require calibrated transform, preparation and measurement errors, as well as a verified noise model. Extending adaptive averaging to other quadratures is a further question: the local weight, separation and analytic estimates must be established for each new family.

\bibliographystyle{plain}
\begingroup
\raggedright
\bibliography{references}
\endgroup
\clearpage
\section*{Supplementary proofs and additional results}
\phantomsection
\addcontentsline{toc}{section}{Supplementary proofs and additional results}
\setcounter{section}{0}
\setcounter{theorem}{0}
\setcounter{equation}{0}
\setcounter{figure}{0}
\setcounter{table}{0}
\setcounter{remark}{0}
\renewcommand{\thesection}{S\arabic{section}}
\renewcommand{\thetheorem}{S\arabic{theorem}}
\renewcommand{\theequation}{S\arabic{equation}}
\renewcommand{\thefigure}{S\arabic{figure}}
\renewcommand{\thetable}{S\arabic{table}}
\renewcommand{\theremark}{S\arabic{remark}}
\renewcommand{\theHsection}{S\arabic{section}}
\renewcommand{\theHtheorem}{S\arabic{theorem}}
\renewcommand{\theHequation}{S\arabic{equation}}
\renewcommand{\theHfigure}{S\arabic{figure}}
\renewcommand{\theHtable}{S\arabic{table}}
\renewcommand{\theHremark}{S\arabic{remark}}

These supplementary proofs establish the analytic estimates, averaging-operator factorization, and assembly lemmas used in the main text. They also give the uniform-grid sampling contract, continuous-input error bounds, and introduce a mode-population noise diagnostic that we pose as an open problem for future hardware characterization.

We use the article's Hermite normalization, ordered roots
$x_1<\cdots<x_N$ of $H_N$, weights $w_i$, and transform
$B_{k,i}=\sqrt{w_i}\,h_k(x_i)$, with
$h_k=H_k/(2^k k!\sqrt\pi)^{1/2}$ and
$\varphi_k=e^{-x^2/2}h_k$.  Degree labels are $0\leq k<N$ and
root $i$ is stored as $i-1$.  Norms of finite-dimensional maps are
operator norms.  Circuit error includes all workspace, restored to
zero in the ideal output, with fixed global phase and arbitrary
external reference systems, as in article
Equation~\eqref{eq:aw-main-contract}.

The uniform sampler in Section~\ref{S-qht:appendix} uses only the
analytic Mehler identity proved in Section~\ref{S-app:mehler}, not its implementation
theorem.  Its proof therefore does not depend on the adaptive circuit
whose construction uses this sampler.

\section{Supporting proofs for adaptive sampling}
\label{S-app:analytic-proofs}
This appendix supplies the local weight and averaging estimates, exact
normalization arithmetic, and full-output amplification bound used in
main-text Section~\ref{sec:adaptive-windows}.  The analytic estimates do not depend
on the circuit construction.

\subsection{Weights and root separation}\label{S-app:local-weight}
\begin{proof}[Proof of main-text Lemma~\ref{lem:local-weight}]
The roots of $H_N$ are the eigenvalues of the real symmetric tridiagonal
$N\times N$ matrix with zero diagonal and off-diagonal entries
$\sqrt{k/2}$, $1\leq k<N$, by the Hermite recurrence
\cite[Sec.~2]{golub1969}.  Every absolute row
sum is below $\sqrt{2N}$, so the eigenvalue row-sum bound gives
$|x_i|<\sqrt{2N}$ and the bounds on $q_i$.

Orthogonality of $B$ gives the effective weight as
\[
 v_i=\left(\sum_{k=0}^{N-1}\varphi_k(x_i)^2\right)^{-1}.
\]
For $W(x)=e^{-x^2/2}$, let
$\lambda_N(W^2,x)=(\sum_{k=0}^{N-1}h_k(x)^2)^{-1}$ be the
Christoffel function.  Thus $v_i=\lambda_N(W^2,x_i)/W(x_i)^2$.

Write $W=e^{-Q}$.  Its Mhaskar--Rakhmanov--Saff number $a_N>0$
is determined by
\[
 N=\frac2\pi\int_0^1
       \frac{a_NtQ'(a_Nt)}{\sqrt{1-t^2}}\,dt.
\]
We use the Christoffel estimate of Levin and Lubinsky
\cite[displayed estimate in the abstract]{levinlubinsky1992}:
for a fixed admissible Freud weight $W=e^{-Q}$ and fixed $A>0$,
\begin{equation}\label{S-eq:aw-christoffel-input}
 \frac{\lambda_N(W^2,x)}{W(x)^2}
 \asymp\frac{a_N}{N}
 \left(\max\{N^{-2/3},1-|x|/a_N\}\right)^{-1/2},
 \quad |x|\leq a_N(1+A N^{-2/3}).
\end{equation}
The comparison constants are uniform in $N,x$.  In the Gaussian case
$Q(x)=x^2/2$ is even, $Q'(x)>0$ for $x>0$, and
$(xQ'(x))'/Q'(x)=2$, satisfying the source hypotheses.  The
defining integral gives
\[
 N=\frac{a_N^2}{2};
 \qquad a_N=\sqrt{2N}.
\]
All roots therefore lie in the domain of
\eqref{S-eq:aw-christoffel-input}.  An indexing convention that includes
degree $N$ gives the same value at $x_i$, because $H_N(x_i)=0$.

To compare the source's edge scale with $q_i$, put
$u_i=|x_i|/\sqrt{2N}$ and $\delta_i=1-u_i$.  These measure normalized
position and distance from the edge.  Define
\begin{equation}\label{S-def:M_i}
 M_i=\max\{N^{1/3},N\delta_i\}.
\end{equation}
The source estimate gives $v_i\asymp\sqrt2/\sqrt{M_i}$.  Since
$0\leq u_i<1$,
\[
 q_i=N^{1/3}+2N\delta_i(1+u_i),\qquad
 M_i\leq q_i\leq5M_i.
\]
Hence $v_i\asymp q_i^{-1/2}$, proving
main-text Equation~\eqref{eq:aw-local-weight}.  The root row-sum bound above and the
differential-equation argument below do not use the zero-location
corollary addressed in the source's erratum
\cite{levinlubinsky1995erratum}.

To prove separation, fix one root $x$ of a consecutive gap of
length $g$ and write $q=2N-x^2+N^{1/3}$.  Assume $g<q^{-1/2}$, then for every coordinate $y$ in
the gap
\begin{equation}
 2N+1-y^2 \underset{|x-y| \le g}{\leq} 2N+1-x^2+2|x|g  \leq q+\frac{2\sqrt{2N}}{\sqrt q}
 \leq(1+2\sqrt2)q
\end{equation}
where $q^3\geq N$ was used in the last step.  The Hermite equation
\cite[Table~18.8.1, Hermite row]{dlmf} gives
$\varphi_N''+(2N+1-y^2)\varphi_N=0$.  Multiplication by
$\varphi_N$ and integration between its two zeros, followed by the
Dirichlet Poincar\'e inequality, gives
\begin{equation}
 \frac{\pi^2}{g^2}
 \leq\frac{\int|\varphi_N'|^2}{\int|\varphi_N|^2}
 =\frac{\int(2N+1-y^2)|\varphi_N|^2}{\int|\varphi_N|^2}
 \leq(1+2\sqrt2)q.
\end{equation}
The denominator is nonzero.  Since $g^2q<1$, this contradicts
$\pi^2>1+2\sqrt2$, so the assumption is false and $g\geq q^{-1/2}$. Applying the result at both endpoints proves
main-text Equation~\eqref{eq:aw-local-gap}.
\end{proof}

\subsection{The scaled complex Hermite bound}\label{S-app:mehler}
\begin{proof}[Proof of main-text Lemma~\ref{lem:aw-mehler}]
Mehler's formula \cite[Eq.~18.18.28]{dlmf}, in the normalization of
$\varphi_k$, gives
\begin{equation}\label{S-eq:aw-complex-mehler}
 \sum_{k=0}^\infty e^{-tk}|\varphi_k(X+iY)|^2
 =\frac{e^{-\tanh(t/2)X^2+\coth(t/2)Y^2}}
        {\sqrt{\pi(1-e^{-2t})}},\qquad t>0.
\end{equation}
Normal convergence extends the real polynomial identity to complex
arguments.  The generating function
$e^{2zu-u^2}=\sum_{k\geq0}H_k(z)u^k/k!$, Cauchy's estimate on
$|u|=\sqrt{k/2}$, and Stirling's upper bound imply, on every compact
set of complex $z$,
\[
 \frac{|H_k(z)|}{\sqrt{2^k k!}}
 \leq C k^{1/4}e^{\sqrt{2k}|z|},\qquad k\geq1.
\]
Thus the two-variable Mehler series converges normally for $|r|<1$.
Apply the identity theorem in each variable, substitute $X+iY$ and
$X-iY$ with $r=e^{-t}$, and include the Gaussian factors to obtain
\eqref{S-eq:aw-complex-mehler}, including the positive $Y^2$ term.

Set $t=q_i^{-1}\leq1$ and write
$c_i+\rho_i z=x_i+\Delta+iY$, with $X=x_i+\Delta$.
The hypotheses give
\begin{align}
    |\Delta| &= |(c_i-x_i) + \rho_i \operatorname{Re} (z)| \leq r_i + 2 r_i \le 3r_i = \frac{3c}{\sqrt{q_i}} \\
    |Y| &= |\rho_i \operatorname{Im} (z)| \le 2r_i = \frac{2c}{\sqrt{{q_i}}}
\end{align}

For $k<N$, $e^{-tk}\geq e^{-Nt}$.  Multiplying the first $N$
terms by $e^{Nt}$ and then including the remaining nonnegative terms gives

\begin{equation}\label{S-eq:analy_exp_bound}
    \sum_{k=0}^{N-1} |\varphi_k(c_i +\rho_iz)|^2 \le e^{Nt} \frac{e^{-\tanh(t/2)X^2+\coth(t/2)Y^2}}
        {\sqrt{\pi(1-e^{-2t})}}
\end{equation}

The inequalities
$u-u^3/3\leq\tanh u\leq u$ and
$\coth(t/2)\leq1+2/t\leq3/t$ imply
\begin{align*}
 e^{Nt-\tanh(t/2)(x_i+\Delta)^2+\coth(t/2)Y^2}
 &\leq e^{t(N-x_i^2/2)+t^3x_i^2/24
          +t|x_i\Delta|+3Y^2/t}\\
 &\leq e^{1/2+1/12+3\sqrt2c+12c^2}
\end{align*}
Here $q_i^3\geq N$ controls the second and third terms. Concavity gives
$1-e^{-2t}\geq(1-e^{-2})t = \frac{1-e^{-2}}{q_i}$ for $0\leq t = q_i^{-1}\leq 1$. Combining this with the exponential
bound~\eqref{S-eq:analy_exp_bound} yields main-text Equation~\eqref{eq:aw-mehler-row}, with
\begin{equation}
 C_M = \frac{e^{7/12+3\sqrt2c+12c^2}}{\sqrt{\pi(1-e^{-2})}}.
\end{equation}
Multiply by $v_i\leq\overline C_W/\sqrt{q_i}$ to obtain the row bound. The $\sqrt{q_i}$ terms
cancel, bounding the row vector norm by $\sqrt{\overline C_W C_M}\leq B_0$. Summing the squared row bounds across all $N$ rows bounds
the Frobenius norm by $B_0\sqrt N$.
\end{proof}

\subsection{The three averaging errors}\label{S-app:approximation}
\begin{proof}[Proof of article Proposition~\ref{prop:aw-approximation}]
Let $F(z)$ have rows $f_i(z)$ from main-text Lemma~\ref{lem:aw-mehler}.
The same dimensionless variable $z$ parametrizes every row, even though
their physical centers and radii differ.
Write $F(z)=\sum_{j\geq0}F_jz^j$.  These Taylor coefficients satisfy
$\|F_j\|\leq B_0\sqrt N\,2^{-j}$ by Cauchy's integral formula on
the radius-two disk.  Thus its degree-$d$ Taylor polynomial has uniform
error at most $B_0\sqrt N\,2^{-d}$ on $[-1,1]$.  Applying
main-text Equation~\eqref{eq:aw-reproduction} to that polynomial to approximate $F(0)$ and using
the bound on $K_d$ in main-text Equation~\eqref{eq:aw-kernel-bounds} gives
\begin{equation}
 \|T_{\mathrm{cont}}-F(0)\|
 \leq B_0 C_0\sqrt N\,2^{-d}.
\end{equation}

The row lengths $m_i$ may also differ.  For a real $u\in[-1,1]$ the
radius-one disk about $u$ lies in $|z|\leq2$; Cauchy's estimate gives
$\|f_i'(u)\|_2\leq B_0$.  Hence
$g_i(u)=K_d(u)f_i(u)$ is Lipschitz with constant
$\|g_i'(u)\|_2 = \|f_i'(u)K_d(u) + f_i(u)K_d'(u)\|_2 \leq B_0(C_0+C_1)$.  A vector-valued Lipschitz function of constant $H$
has normalized midpoint error at most $H/(2m)$: the integral of the
distance to the midpoint of a cell of length $2/m$ is $m^{-2}$, and
the integral is normalized by $1/2$.  Apply this separately to each row
and then use the Frobenius norm to obtain
\[
 \|T-T_{\mathrm{cont}}\|
 \leq\frac{B_0(C_0+C_1)}2\sqrt{\sum_i m_i^{-2}}.
\]
Finally apply main-text Lemma~\ref{lem:aw-mehler} with center $x_i$ and radius
$r_i$.  Its derivative bound along the segment from $x_i$ to $c_i$
gives a row error at most $B_0h/r_i$, and therefore
$\|F(0)-B^\dagger\|\leq(B_0h/c)\sqrt{\sum_iq_i}$.
Combining these three estimates gives the more detailed bound
\begin{equation}\label{S-eq:aw-rowwise-error}
 \|T-B^\dagger\|
 \leq\frac{B_0h}{c}\sqrt{\sum_iq_i}
      +\alpha_d2^{-d}
      +\frac{B_0(C_0+C_1)}2\sqrt{\sum_i m_i^{-2}}.
\end{equation}
Since $m_i>r_i/(2h)$ and $\sum_iq_i\leq3N^2$,
\[
 \sqrt{\sum_i m_i^{-2}}
 \leq\frac{2h}{c}\sqrt{\sum_iq_i}
 \leq\frac{2\sqrt3Nh}{c}.
\]
Thus the first and third terms in \eqref{S-eq:aw-rowwise-error} sum to
at most $\alpha_hh$, with the coefficients defined in
main-text Equation~\eqref{eq:aw-error-coefficients}.  This proves article
Equation~\eqref{eq:aw-error}.
\end{proof}

\subsection{Amplification with fixed phase and full output}
\label{S-app:amplification}
\begin{proof}[Proof of main-text Lemma~\ref{lem:clean-amplification}]
Put $t=\lceil A\rceil$, $\theta=\pi/(4t+2)$, and
$s=\sin\theta$.  Adjoin an independent coin with positive good
amplitude $\beta=As<1$; the inequality follows from
$As\leq\pi A/(4t+2)<1$.  Require this coin to be good in the output
subspace, choosing its good value to be zero.  For the resulting
unitary $U$, the good block is
$s\widetilde T$.  Write
$\Pi_{\mathrm{in}}=E_{\mathrm{in}}E_{\mathrm{in}}^\dagger$ and
$\Pi_{\mathrm{out}}=E_{\mathrm{out}}E_{\mathrm{out}}^\dagger$,
and let $R_{\mathrm{in}}=\id-2\Pi_{\mathrm{in}}$ and
$R_{\mathrm{out}}=\id-2\Pi_{\mathrm{out}}$.  The amplification is
\begin{equation}\label{S-eq:aw-outer-sequence}
 U_{\mathrm{amp}}=(-U R_{\mathrm{in}}U^\dagger R_{\mathrm{out}})^tU.
\end{equation}
The displayed scalar signs and the positive coin amplitude fix its
global phase.  The scalar $-\id$ has the exact Clifford realization
$\sigma_x\sigma_z\sigma_x\sigma_z=-\id$ on one qubit.

Take a singular value decomposition
$\widetilde T=\sum_j\sigma_j\ket{u_j}\bra{v_j}$.
Since $Q$ is unitary, $1-\delta\leq\sigma_j\leq1+\delta$.
Define $\alpha_j=\arcsin(s\sigma_j)$ and the normalized bad vector
\[
 \ket{b_j}=\frac{U E_{\mathrm{in}}\ket{v_j}
       -\sin\alpha_j E_{\mathrm{out}}\ket{u_j}}
             {\cos\alpha_j}.
\]
The denominator is positive since $s\leq1/2$ and $\sigma_j\leq1.1$.
The vectors are orthogonal to the output subspace.  Subtracting the
good-component Gram matrix from that of the unitary images gives
$\langle b_j,b_k\rangle=\delta_{j,k}$.  Thus the planes
$\operatorname{span}\{E_{\mathrm{out}}u_j,b_j\}$ are pairwise
orthogonal.  In the ordered basis $(E_{\mathrm{out}}u_j,b_j)$ the
reflection product in \eqref{S-eq:aw-outer-sequence} is
\[
 \begin{pmatrix}
  \cos(2\alpha_j)&\sin(2\alpha_j)\\
  -\sin(2\alpha_j)&\cos(2\alpha_j)
 \end{pmatrix}.
\]
It follows, including the bad component, that
\[
 U_{\mathrm{amp}} E_{\mathrm{in}}v_j
  =\sin((2t+1)\alpha_j)E_{\mathrm{out}}u_j
    +\cos((2t+1)\alpha_j)b_j.
\]
At $\sigma_j=1$ the angle is $(2t+1)\theta=\pi/2$.
On the interval of singular values in question the derivative of
$\arcsin(s\sigma)$ is at most $2s$, whence
\[
 |(2t+1)\alpha_j-\pi/2|
 \leq2(2t+1)s\delta\leq\pi\delta.
\]
The full output's distance from $E_{\mathrm{out}}u_j$ is twice the
sine of half the angular deviation, at most $\pi\delta$.
Orthogonality of the planes extends this bound to operator norm.
Writing $Q_{\mathrm{pol}}=\sum_j\ket{u_j}\bra{v_j}$ for the unitary
polar factor, we have
\[
 \|U_{\mathrm{amp}} E_{\mathrm{in}}-E_{\mathrm{out}}Q_{\mathrm{pol}}\|
 \leq\pi\delta,\qquad
 \|Q_{\mathrm{pol}}-\widetilde T\|
 =\max_j|1-\sigma_j|\leq\delta.
\]
Together with $\|\widetilde T-Q\|\leq\delta$, this proves
main-text Equation~\eqref{eq:aw-full-amplification} for the unprojected output.  Tensoring
with an identity preserves operator norm and includes an external reference.
\end{proof}

\section{Factorization of the averaging operator}\label{S-sec:R_decomposition}
\begin{proof}
The factorization in main-text Equation~\eqref{eq:aw-R-factorization}
follows entrywise. Recall the normalized rows and scales from article
Lemma~\ref{lem:aw-row-state}:

\begin{equation}
    W|i-1\rangle=\sum_{j=0}^{m_i-1}\frac{K_d(u_{i,j})}{\sqrt{m_iD_i}}|a_i+j\rangle.
\end{equation}
All coefficients are real, so
\begin{equation}
    (W^\dagger)_{i,a_i+j}=\frac{K_d(u_{i,j})}{\sqrt{m_iD_i}}.
\end{equation}

Since $2\rho_i=m_ih$, the row scale is
\begin{equation}
    \lambda_i=\sqrt{\frac{v_iD_i}{2\rho_i}}=\sqrt{\frac{v_iD_i}{m_ih}}.
\end{equation}

Consequently
\begin{align}
    \lambda_i(W^\dagger)_{i,a_i+j} &= \sqrt{\frac{v_iD_i}{m_ih}}\left(\frac{K_d(u_{i,j})}{\sqrt{m_iD_i}}\right) \nonumber \\
    &= \frac{\sqrt{v_i}K_d(u_{i,j})}{m_i\sqrt{h}}
     =R[i,a_i+j].
\end{align}

Both sides vanish outside the row window, by article
Equation~\eqref{eq:aw-rectangular}. Thus
$R=\diag(\lambda_i)W^\dagger$ and
$R^\dagger=W\diag(\lambda_i)$.

\end{proof}

\section{Finite-Precision Synthesis and Complexity Analysis}
\label{S-app:finite_precision}

To approximate the ideal circuit, we must first control the arithmetic error in each flag amplitude. The separation of the amplitudes from $\pm1$ bounds the derivative of the $\arcsin$ function, ensuring that an arithmetic amplitude error $\xi$ translates to a bounded rotation error $O(\xi)$. Rational polynomial evaluation, reciprocal square roots on $D_i\geq1/2$, and bounded-angle functions admit arithmetic circuits polynomial in $d$, $\log L$, and the required output precision.

For example, let $L_j$ be the sum of the absolute monomial coefficients of the polynomial $P_j$. The Legendre recurrence gives $L_{j+1}\leq2L_j+L_{j-1}$, and hence $L_j\leq3^j$. Thus, the coefficient sum of the kernel $K_d$ is at most $C_0 3^d$. Horner evaluation on $|u|\leq1$ with absolute error $\xi$ uses $O(d+\log(d+1)+\log(1/\xi))$ magnitude, guard, and fractional bits, up to fixed arithmetic overhead. Integer addresses and dyadic arguments are computed exactly. The same argument applies to the row flag, using certified approximations of $\lambda_i/\overline{A}$ with its fixed slack.

We choose a working-bit bound polynomial in $d$, $\log N$, $\log L$, and $\log(1/\zeta)$. By retaining intermediate values, copying the output, and reversing the computation, Boolean arithmetic is made reversible in polynomial time and space. Bit-controlled rotations synthesize the angles with polynomial overhead in this bound and $\log(1/\zeta)$, using fixed-phase Clifford+$T$ approximation. We rely only on the existence of this overhead, not typical synthesis counts.

At every inverse call, we use the actual synthesized circuit in reverse. We compare the resulting full unitary sequence with its ideal sequence by telescoping, allocating the error per occurrence, including the $O(d+1)$ calls to the amplification block. This bounds the accumulated approximation error on all output registers with logarithmic extra precision in the occurrence count. The bounded extensions ensure uniform approximation on every address encountered during amplification.

Metadata remains loaded throughout the inner amplification sequence; loading, unloading, and the final interval scan each cost $O(N)$ times a polynomial in bit length. All remaining operations have polynomial cost in that length and $d$, proving the total gate and width bounds for the circuit and its inverse.

\section{Proof of Compatible Parameter Choice}\label{S-sec:param_choice}
\begingroup\begin{proof}
We prove main-text Lemma~\ref{lem:param-choice}.  For $N\geq2$ and
$0<\varepsilon\leq1/2$, certify $\overline C_W$ as in article
Equation~\eqref{eq:aw-certified-weight} and fix
$B_0\geq\sqrt{\overline C_W C_M}$ from the analytic bound.  Choose an
integer $d\geq1$ and then $\overline A$ as in article
Equation~\eqref{eq:aw-Abar}, with
\begin{equation}\label{S-eq:aw-precision-choices}
B_0(d+1)^2\sqrt N\,2^{-d}\leq\varepsilon/48,
 \qquad
 \tau=\frac{\varepsilon}{24\overline A}.
\end{equation}
The exponential decay permits $d=O(\log(N/\varepsilon))$, so
$\overline A=O(\sqrt{d+1})$ and $0<\tau\leq1/10$.
These choices precede the grid size: neither error coefficient in article
Equation~\eqref{eq:aw-error-coefficients} depends on $L$.

Choose a power-of-two $L\geq C(N+1)^3\tau^{-8}$, as required by
Theorem~\ref{S-thm:uniform-grid}, large enough that
\begin{equation}\label{S-eq:aw-grid-closure}
h\leq\min\left\{
  \frac{c}{16\sqrt{3N}},
  \frac{c}{8C_0C_1\sqrt{3N}},
  \frac{c\varepsilon}{48\sqrt3 B_0(1+C_0+C_1)N}
 \right\},
 \qquad \frac{Lh}{2}\geq\sqrt{2N}+2.
\end{equation}
The first bound permits the certified disjoint windows of article
Equation~\eqref{eq:aw-window-choice}.  The factor-four admissible
radius interval permits a dyadic choice with a margin; a root enclosure
narrower than $h/4$ determines a half-grid center within $h$ of the root.
The second bound gives $m_i\geq4C_0C_1$, since
$m_i>r_i/(2h)$ and $r_i\geq c/\sqrt{3N}$.  The last condition contains
every window in the grid.  The third bound and
\eqref{S-eq:aw-precision-choices} give $E_d(h)\leq\varepsilon/24$.

Every lower bound on $L=2\pi/h^2$ is polynomial in $N,1/\varepsilon$;
rounding the largest to the next power of two gives
$L=\operatorname{poly}(N,1/\varepsilon)$ and
$\log L=O(\log(N/\varepsilon))$.  Reselecting the windows on this
grid preserves $\overline A\geq2\|R\|$, because the bound in article
Equation~\eqref{eq:aw-Abar} holds for every admissible grid.
Finally $\overline A\tau=\varepsilon/24$, which is the separate sampler
error allocation used in the main proof.
\end{proof}
\endgroup

\section{Proof of Finite Precision Compilation}\label{S-sec:precision_compilation}
\begingroup\begin{proof}
We prove main-text Lemma~\ref{lem:finite-compilation} for the circuit
assembled there, not for an arbitrary ideal circuit.  Put
\[
 t=\lceil\overline A\rceil,\qquad M=2t+1,\qquad
 \zeta=\frac{\varepsilon}{4M}.
\]
The fixed-phase amplification sequence \eqref{S-eq:aw-outer-sequence}
has exactly $M$ forward or inverse block calls.  Each block comprises
the actual sampler $V_L$, the averaging unitary, and a coin whose
positive good amplitude is
$\overline A\sin(\pi/(4t+2))<\pi/4$.
Approximate the averaging unitary and the coin each to operator error
$\zeta$.  Article Lemma~\ref{lem:aw-row-state}, with the bounded
extensions and guard-bit analysis of Section~\ref{S-app:finite_precision},
includes the inner amplification, arithmetic, metadata scans, and
cleanup in the averaging-unitary budget.  The fixed coin slack makes
its angle uniformly conditioned.  The input and good-output reflections
test specified zero registers and fixed register locations; they have
exact Clifford+$T$ implementations with reusable clean workspace.

Use the same compiled $V_L$ in the comparison and implemented sequences.
Its full-output accuracy $\tau$ has already entered article
Equation~\eqref{eq:aw-good-error}; no exact implementation of the
nonisometric sampling matrix $J_L$ is assumed.  Every inverse call is
the actual reverse of the corresponding circuit, including $V_L$.
Thus each block differs from its comparison block by at most
$2\zeta$, and full-unitary telescoping adds at most
$2M\zeta=\varepsilon/2$ on all output registers.  Variations in row
success amplitudes and nonzero residual workspace are included in this
norm estimate, rather than discarded by a final projection.

Here $M=O(\sqrt{d+1})$ and
$\log(1/\zeta)=O(\log(N/\varepsilon))$.  Each call uses charged
$O(N)$-record scans and arithmetic of polynomial logarithmic cost;
the sampler, coin, and reflections have polynomial logarithmic cost
as well.  The online logical Clifford+$T$ gate count is therefore
$O(N\operatorname{polylog}(N,1/\varepsilon))$, which also bounds
depth.  Serial scans and inverse computations reuse their workspace,
giving total width $\operatorname{polylog}(N,1/\varepsilon)$.

The $O(N)$ loaded records have polynomial logarithmic length.  Their
classical certification uses real-root isolation for the degree-$N$
integer polynomial $H_N$, interval evaluation of the weight formula,
and direct rational evaluation of the normalization constants $D_i$.
The coefficients of $H_N$ have $O(N\log N)$ bits; the root-gap
bound in main-text Lemma~\ref{lem:local-weight} and the dyadic margins
above make the required isolations finite with polynomial bit cost.
For the normalizations, write $K_d=P/Q$ with integer $P$ of degree at
most $d$; a common denominator for $D_i$ divides $Q^2m_i^{2d+1}$.
The Legendre recurrence and $D_i\leq d+5/4$ bound the denominator and
numerator bit lengths polynomially in $d+\log m_i$.
Since $m_i\leq L=\operatorname{poly}(N,1/\varepsilon)$, direct summation
therefore has polynomial classical bit cost.
Gaussian factors can require polynomially many classical guard bits
during certification; only the shorter certified records enter the
online circuit.  Classical certification and circuit-description
generation are polynomial in $N,1/\varepsilon$ and are charged
separately from the online gate and width bounds.
\end{proof}
\endgroup

\section{Coherent sampling of Hermite expansions on a uniform grid}
\label{S-qht:appendix}

Jain et al. construct a uniform-grid Hermite sampler for arbitrary
coefficient superpositions, including coherent degree-label erasure
\cite[Theorem~19, Lemma~27 and Algorithm~2]{jain2025}.  Their factor
$(-1)^k$ is removed by an input $Z$ on the least significant degree bit
when $N\geq2$, and by the identity when $N=1$.  We retain a direct
sampled-function proof of the clean interface used in the adaptive
conversion.  It follows their preparation, oscillator-phase selection,
and amplification strategy, without requiring the complete spectrum of a
discretized Hamiltonian.

Throughout, $L\geq8$ is dyadic,
$m=\log_2L$, $h=\sqrt{2\pi/L}$, and $I_L=[-L/2,L/2)$.
Grid labels $\ell\in I_L\cap\mathbb Z$ use a fixed reversible signed
encoding, such as $\ell\bmod L$.  For $1\leq d<L$, set
\begin{equation}\label{S-qht:sampling-matrix}
 J_L^{(d)}[\ell,k]=\sqrt h\,\varphi_k(h\ell),\qquad
 0\leq k<d,\qquad J_L=J_L^{(N)},
\end{equation}
where $\varphi_k(x)=e^{-x^2/2}H_k(x)/(2^k k!\sqrt\pi)^{1/2}$ has
positive leading coefficient.  Norms of finite-dimensional maps are
operator norms; absolute constants may increase between estimates.

\begin{theorem}[Clean uniform-grid sampling]\label{S-thm:uniform-grid}
There is an absolute constant $C$ with the following property.  Let
$N=2^n\geq1$ and $0<\tau\leq1/10$.  For every power of two
\begin{equation}\label{S-qht:grid-floor}
 L\geq C(N+1)^3\tau^{-8},
\end{equation}
there is a logical quantum circuit $V_L$ with clean input and output
embeddings $E_{\mathrm{in}}:\mathbb C^N\longrightarrow\hilbert_{\rm all}$
and $E_{\mathrm{grid}}:\mathbb C^L\longrightarrow\hilbert_{\rm all}$ such
that
\begin{equation}\label{S-qht:clean-interface}
 \bigl\|V_LE_{\mathrm{in}}-E_{\mathrm{grid}}J_L\bigr\|\leq\tau.
\end{equation}
The input embedding places the degree label in a zero-padded register and
initializes all other registers to zero.  The output embedding retains only
the grid register and sets the degree, phase, and arithmetic registers to
zero.  The logical gate count and total number of qubits are polynomial in
$\log L+\log(1/\tau)$.  The estimate includes the phases of all columns and
holds after tensoring with an arbitrary reference system.

The constant and the circuit family are qualitative asymptotic ones.  The
construction uses the classical bulk Hermite asymptotic and the standard
phase-preserving singular-vector amplification theorem stated below; it
does not specify a numerical value of $C$, an optimized gate exponent, or a
physical-hardware compilation.
\end{theorem}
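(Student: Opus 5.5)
We need to prove Theorem \ref{S-thm:uniform-grid}, the clean uniform-grid sampling theorem. The plan follows the Jain et al.\ strategy described just before the statement. Build a block-encoding of $J_L/\alpha$ for a modest constant $\alpha$, then amplify it to a near-isometry with the fixed-phase argument of Lemma~\ref{lem:clean-amplification}.

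\textbf{Step 1: $J_L$ is nearly isometric.} The Gram matrix $J_L^\dagger J_L$ has entries $h\sum_\ell\varphi_k(h\ell)\varphi_{k'}(h\ell)$. We compare this with $\int\varphi_k\varphi_{k'}=\delta_{kk'}$ in two parts.
\begin{itemize}
\item \emph{Truncation.} The truncated tail $|x|\geq Lh/2=\sqrt{\pi L/2}$ lies far beyond the turning point $\sqrt{2N}$, so Gaussian decay makes it negligible.
\item \emph{Discretization.} We use Poisson summation. The Fourier transform of $\varphi_k\varphi_{k'}$ is again a Hermite-type function, concentrated in $|p|\lesssim\sqrt{N}$. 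The aliased terms at frequencies $2\pi j/h=j\sqrt{2\pi L}$ are therefore bounded by a Mehler-type estimate, in the spirit of Section~\ref{S-app:mehler}.
\end{itemize}
This gives $\|J_L^\dagger J_L-\id\|\leq\tau/10$ once $L\geq C(N+1)^3\tau^{-8}$. The stated polynomial floor is generous, so crude bounds suffice.

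\textbf{Step 2: block-encode $J_L/\alpha$.}
\begin{enumerate}
\item Prepare the uniform superposition over $\ell$, then coherently evaluate $\sqrt h\,\varphi_k(h\ell)$ into a flag rotation. Direct evaluation of $\varphi_k$ for $k$ up to $N$ via the three-term recurrence costs $O(N)$, which is too expensive. Instead we follow the oscillator-phase selection of Jain et al.
\item Take the uniform Hermite asymptotic in the bulk: $\varphi_k(x)\approx$ an amplitude $A_k(x)$ times $\cos(\Phi_k(x))$. Both $A_k$ and $\Phi_k$ have closed forms computable with polylogarithmic arithmetic. Write the cosine as the sum of two phases $e^{\pm i\Phi_k}$ and select between them with a LCU qubit.
\item Outside the bulk, the region near and beyond the turning points, use the exponential decay and a bounded extension. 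Alternatively, evaluate the Airy-regime approximation, or simply charge the small region to the error.
\item Degree-label erasure is handled as in their Lemma~27, by uncomputing $k$ through the phase-kickback structure. The extra $(-1)^k$ is removed by a $Z$ gate on the low bit.
\end{enumerate}
Each approximation error is made $\leq\tau^{O(1)}$, uniformly in $k<N$ and over the grid.

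\textbf{Step 3: amplify.} This yields a unitary whose designated block is $\widetilde J/\alpha$ with $\|\widetilde J-J_L\|\leq\tau/10$. We amplify with the singular-vector fixed-phase sequence, as in the proof of Lemma~\ref{lem:clean-amplification}. That proof never used squareness of $Q$, only that the singular values lie near $1$, and this now holds by Step~1. Its conclusion is an output within $O(\tau)$ of the polar factor, hence within $\tau$ of $E_{\mathrm{grid}}J_L$ after adjusting constants. Gate counts are polylogarithmic in $L$ and $1/\tau$ because $\alpha=O(1)$.

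\textbf{Main obstacle.} The hardest step is Step 2, the uniform-in-$k$ error of the asymptotic evaluation, especially near turning points. I would first try to avoid Airy asymptotics entirely. The idea is to show that the transition and forbidden regions carry total mass $O(N^{-1/6})$-scale per column, and then require fine enough precision only in the bulk. If this is insufficient, I would fall back on a Plancherel--Rotach formula with certified error.
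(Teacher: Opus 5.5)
\textbf{There is a genuine gap: Step~2 cannot supply what Step~3 assumes.} Your Steps~1 and~3 are sound. The Gram estimate is essentially the paper's, and the fixed-phase amplification argument does go through for a rectangular block with singular values near one.

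The problem is accuracy. You need a block $\widetilde J/\alpha$ with $\|\widetilde J-J_L\|\leq\tau/10$ for arbitrary $\tau$. The only source of $\widetilde J$ you describe is loading approximate values of $\varphi_k(h\ell)$. You mention the oscillator-phase selection of Jain et al., but you never use it to relax this accuracy requirement.
\begin{itemize}
\item A bulk Szeg\H{o} or Plancherel--Rotach formula has an intrinsic error of order $k^{-5/4}$, uniform only away from the turning points. This error does not shrink with arithmetic precision, so it is useless for small $k$ and cannot reach a $\tau$ independent of $k$.
\item The transition zone near $\pm\sqrt{2k+1}$ carries mass of order $k^{-1/3}$, which is order one for small $k$ and independent of $\tau$. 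It cannot be charged to the error either.
\item Uniform $\tau$-accurate evaluation at cost polynomial in $\log L+\log(1/\tau)$ would need certified high-order asymptotics through the Airy regime, plus a separate method for small $k$. You provide neither.
\end{itemize}

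The normalization claim also fails. Take a uniform superposition over the $T_k\asymp\sqrt{k+1}/h$ points of the allowed region. The flag amplitude at the turning point is then $\sqrt{T_kh}\,\|\varphi_k\|_\infty/\alpha\asymp(k+1)^{1/4-1/12}/\alpha=(k+1)^{1/6}/\alpha$. Over the whole grid it is of order $L^{1/4}(k+1)^{-1/12}/\alpha$. So $\alpha=O(1)$ is impossible, and amplification would take order $N^{1/6}$ or $L^{1/4}$ rounds rather than polylogarithmically many.

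Finally, the phase-kickback erasure you borrow needs an efficiently implementable unitary whose grid eigenvectors are, to high accuracy, the sampled Hermite vectors. Step~1 only proves near-isometry of $J_L$ and never constructs such an operator.

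\textbf{The missing idea is that the paper never evaluates $\varphi_k$ accurately.}
\begin{itemize}
\item Lemma~\ref{S-qht:sampled-phases} shows that the QFT--chirp word satisfies $\|\mathcal V_L(t)J-JD_d(t)\|\leq E(d,L)$, exponentially small in $L$, together with the Gram bound.
\item The seeds $f_k$ are only the leading Szeg\H{o} term times a $C^2$ cutoff supported in $|x|\leq\sqrt{3(2k+1)/4}+w_k$, strictly inside the bulk. Hence $(k+1)^{1/4}f_k$ is bounded and the loading factor $\kappa_k$ is of constant order.
\item The seeds need only a constant overlap $b>0$ with $\varphi_k$ and a component beyond degree $d$ of squared norm $O((N+1)/d)$.
\item Phase estimation with $e^{it_j/2}\mathcal V_L(t_j)$, subtraction of $k$, and a zero-phase-register test project each seed onto the exact frame vector $q_k$ of $Q=JG^{-1/2}$. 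The crudeness of the seed affects only the success amplitude. Output accuracy is set by $d$, $L$ and the arithmetic precision.
\item Phase-preserving singular-vector amplification then uses $O(\log(1/\tau))$ calls.
\item The inverse of the same phase estimation, run on the retained degree register, erases $k$.
\end{itemize}
The choices $d\asymp(N+1)\tau^{-2}$ and $L\gtrsim(d+1)^3\tau^{-2}$ are what produce the floor $(N+1)^3\tau^{-8}$.
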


\subsection{Sampled oscillator phases}

Let $\mathcal Ff(\xi)=(2\pi)^{-1/2}\int e^{-ix\xi}f(x)\,dx$ and
$F_L[j,\ell]=L^{-1/2}e^{-2\pi i j\ell/L}$.  Both Fourier transforms
have negative sign.  Define
\[
 C_L(a)=\diag_{\ell\in I_L\cap\mathbb Z}(e^{-ia(h\ell)^2}),\qquad
 A(u)=\tfrac12\tan(u/2),\qquad B(u)=\tfrac12\sin u,
\]
and the bounded half-time word
\begin{equation}\label{S-qht:six-chirp-word}
 \mathcal V_L(t)=
 \left[F_L^*C_L(A(t/2))F_L C_L(B(t/2))
 F_L^*C_L(A(t/2))F_L\right]^2,\qquad |t|\leq\pi.
\end{equation}
The following estimate controls its full output, including leakage from
the sampled Hermite space.

\begin{lemma}\label{S-qht:sampled-phases}
Let $J=J_L^{(d)}$ and $D_d(t)=\diag_{k<d}(e^{-it(k+1/2)})$.
There are absolute constants $C_1,C_2,c>0$ such that, with
$E(d,L)=C_1L^{1/4}e^{C_2d-cL}$,
\begin{equation}\label{S-qht:phase-frame-bound}
 \|J^*J-\id_d\|+\sup_{|t|\leq\pi}
 \|\mathcal V_L(t)J-JD_d(t)\|\leq E(d,L).
\end{equation}
For efficiently computable $t$, clean-workspace implementation of the word
to full-output error $\zeta$ uses
$\operatorname{poly}(\log L+\log(1/\zeta))$ gates and qubits.
\end{lemma}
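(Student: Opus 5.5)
The plan is to treat the sampled function as a discretization of the exact continuous identities. Continuously, $\varphi_k$ is an eigenfunction of $\mathcal F$ with eigenvalue $(-i)^k$. The chirp word $\mathcal V_L(t)$ is the discrete analogue of the metaplectic factorization of the oscillator propagator $e^{-itH_{\rm osc}}$: chirp multiplication by $e^{-iA x^2}$, then Fourier conjugation $F^*C F$, which is a Gaussian convolution (free evolution), and so on. The identity $\tan(u/2)$, $\sin u$ gives the standard three-factor decomposition of a rotation by angle $u=t/2$, and the square gives rotation by $t$.

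\emph{Step 1.} I first verify in the continuum that $\mathcal F^{-1}M_{A}\mathcal F M_{B}\mathcal F^{-1}M_A\mathcal F$, with $M_a$ multiplication by $e^{-iax^2}$, equals $e^{-i(t/2)H_{\rm osc}}$ exactly, including the phase $e^{-it/4}$ per half. Both sides are linear canonical transforms, so it suffices to check the symplectic matrices and fix the metaplectic sign on $\varphi_0$ by a Gaussian integral. Applied to $\varphi_k$, this gives $e^{-it(k+1/2)}\varphi_k$.

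\emph{Step 2.} I then replace each continuous operation by its sampled version on functions of the form $p(x)e^{-\gamma x^2}$ with $\operatorname{Re}\gamma$ bounded below. The boundedness of $A(t/2)$ for $|t|\le\pi$ keeps $\operatorname{Re}\gamma$ bounded below along the word. The key tool is Poisson summation. With $h^2L=2\pi$, the DFT of samples $\sqrt h f(h\ell)$ equals the samples of $\mathcal Ff$ on the same grid, up to aliasing terms $\sum_{m\ne0}\mathcal Ff(\xi+2\pi m/h)$ and truncation tails $\sum_{|\ell|\ge L/2}$. Since $2\pi/h=\sqrt{2\pi L}$ and the grid half-width is $\sqrt{2\pi L}/2$, both errors are bounded by Gaussian tails $e^{-c L}$. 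The polynomial factor of degree $<d$ contributes at most $e^{C_2 d}$, using the generating-function bound $|H_k|/\sqrt{2^kk!}\le Ck^{1/4}e^{\sqrt{2k}|z|}$ already used in the proof of Lemma~\ref{lem:aw-mehler}, together with the Mehler bound. Chirp multiplications are exact on samples. Composing six steps per half-word, twelve in all, telescopes the error, giving $\|\mathcal V_L(t)J-JD_d(t)\|\le E(d,L)$ uniformly in $|t|\le\pi$.

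\emph{Step 3.} The frame bound $\|J^*J-\id\|$ follows the same way. $(J^*J)_{km}=h\sum_\ell\varphi_k\varphi_m(h\ell)$ is a Riemann sum of an entire Gaussian-type function. Poisson summation identifies it with $\delta_{km}$ plus aliased Fourier values at $2\pi m'/h$ and truncation tails, again bounded by $E(d,L)$. I expect the main obstacle to be controlling the intermediate (non-sampled) functions uniformly. After a chirp, the Gaussian decay rate becomes complex, and after Fourier conjugation its real part could shrink. I would track the exact complex Gaussian parameter through the word and show its real part stays $\ge c>0$ for $|t|\le\pi$. This is where the half-time restriction and boundedness of $\tan(t/4)$ matter.

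\emph{Step 4, implementation.} $F_L$ is the QFT on $m=\log_2 L$ qubits, with the signed encoding handled by a fixed bit permutation. $C_L(a)$ is a diagonal phase: compute $\ell^2$ reversibly, multiply by a fixed-point approximation of $ah^2$, apply phase-kickback rotations on the bits, and uncompute. Each component, including the QFT approximated to precision $\zeta/12$, costs $\operatorname{poly}(\log L+\log(1/\zeta))$ gates and qubits with clean workspace, and telescoping over the twelve factors gives full-output error $\zeta$.
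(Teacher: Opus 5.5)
Your proposal is correct and follows essentially the paper's route: the continuum three-chirp identity for $e^{-itH_{\rm osc}}$ (you fix the metaplectic sign on $\varphi_0$, the paper fixes it by continuity from $t=0$), Poisson-summation aliasing and tail bounds for each Fourier step on Gaussian-enveloped ideal prefixes, exact chirps with telescoping through unitary grid suffixes, a Poisson-summation Gram estimate, and QFT-plus-chirp compilation. The paper makes precise your step that the ``polynomial factor contributes $e^{C_2d}$'' by tracking the exact family $\sigma^{-1/2}e^{i\kappa x^2/2}e^{i(k\theta+\phi)}\varphi_k(x/\sigma)$, so the Mehler envelope applies directly at $x/\sigma$ (equivalently, rescale $p(x)e^{-\gamma x^2}$ by $\sqrt{2\operatorname{Re}\gamma}$ and use unitarity); also, the word has eight Fourier steps and fourteen factors, not twelve.
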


\begin{proof}
Complex Mehler, Equation~\eqref{S-eq:aw-complex-mehler} in
Section~\ref{S-app:mehler} (see also \cite[Eq.~18.18.28]{dlmf}), gives,
with $a=\tanh(1/2)$, for $f=\sum_{k<d}\alpha_k\varphi_k$ and
$\|\alpha\|_2=1$,
\begin{equation}\label{S-qht:coefficient-envelope}
 |f(x+iy)|\leq Ce^{d/2-ax^2/2+y^2/(2a)}.
\end{equation}
For real $v$, the Hermite generating function and a Gaussian Fourier integral give
\[
 \mathcal F(e^{ivx^2/2}\varphi_k)(\xi)
 =s^{-1/2}e^{i\arctan(v)/2}\left(\frac{v-i}{s}\right)^k
 e^{-iv\xi^2/(2s^2)}\varphi_k(\xi/s),\qquad s=\sqrt{1+v^2}.
\]
The branch is continuous at $v=0$, fixing the scalar phase.  Thus Fourier
transformation preserves the family
$\sigma^{-1/2}e^{i\kappa x^2/2}e^{i(k\theta+\phi)}\varphi_k(x/\sigma)$,
where $\sigma>0$ and $\kappa,\theta,\phi\in\mathbb R$, with
\[
 \sigma'=\frac{\sqrt{1+\kappa^2\sigma^4}}{\sigma},\qquad
 \kappa'=-\frac{\kappa\sigma^4}{1+\kappa^2\sigma^4}.
\]
Inverse Fourier transformation adds a reflection, and a chirp adds its
coefficient to $\kappa$.  The finitely many operations in
\eqref{S-qht:six-chirp-word} have chirp coefficients of magnitude at most
one.  Their continuous parameter maps keep the widths bounded above and
away from zero and the chirps bounded.  Consequently every ideal continuum prefix $g$ and its
next Fourier or inverse Fourier transform satisfy
$|g(x)|+|\mathcal Fg(x)|+|\mathcal F^{-1}g(x)|\leq Ce^{d/2-cx^2}$,
uniformly in $t$ and $\alpha$.

Write $S_Lg=\sqrt h(g(h\ell))_{\ell\in I_L\cap\mathbb Z}$.
Poisson summation, with $2\pi/h=Lh$, gives
\[
 (F_LS_Lg-S_L\mathcal Fg)_j
 =\sqrt h\sum_{q\ne0}\mathcal Fg(hj+qLh)
 -\sqrt{h/L}\sum_{\ell\notin I_L}g(h\ell)e^{-2\pi i j\ell/L}.
\]
The Gaussian bounds and
$|hj+qLh|\geq(2|q|-1)Lh/2$ show that its norm is at most
$CL^{1/4}e^{d/2-cL}$; the inverse transform obeys the same estimate.
Sampling commutes exactly with a chirp.  Telescope the eight Fourier
steps using these bounds on ideal prefixes and unitary grid suffixes.
The corresponding continuum word equals $e^{-itH_{\rm osc}}$, where
$H_{\rm osc}=(\widehat p^2+\widehat x^2)/2$,
$\widehat p=-i\,d/dx$, and $\widehat x f=xf$: apply the three-chirp
identity \cite[Theorem~3, Eq.~(6)]{jain2025} twice at $t/2$, choosing
the continuous branch equal to the identity at time zero.  Since
$H_{\rm osc}\varphi_k=(k+1/2)\varphi_k$, this proves the phase bound.

For the Gram bound, the entire function
$G_f(z)=f(z)\overline{f(\overline z)}$ satisfies
$|G_f(x+iy)|\leq Ce^{d-ax^2+y^2/a}$.  Shifting its Fourier contour to
$y=-a\omega/2$ gives
$|\mathcal FG_f(\omega)|\leq Ce^{d-a\omega^2/4}$; Gaussian decay
removes the vertical boundary terms.  Poisson summation and removal of
the spatial tails therefore give
$h\sum_{\ell\in I_L}|f(h\ell)|^2=1+O(e^{d-cL})$.
The uniform Rayleigh-quotient bound proves the Gram estimate.

Finally, a Fourier circuit has $O(m^2)$ gates before finite-set synthesis.
A chirp computes the signed square, applies its phase, and uncomputes.
Since $(h\ell)^2=O(L)$, coefficient accuracy $O(\zeta/L)$ suffices.
Arithmetic, bounded trigonometric evaluation and rotation synthesis
\cite{ross2016} have polynomial cost in
$\log L+\log(1/\zeta)$; allocate error across their gates.
Compute--phase--uncompute gives the stated clean-input, full-output bound.
\end{proof}

For analysis only, when $\gamma=\|J^*J-\id_d\|\leq1/2$, set
\[
 G=J^*J,\qquad Q=JG^{-1/2},\qquad q_k=Q\ket{k}.
\]
Functional calculus on $[1/2,3/2]$ gives $Q^*Q=\id_d$ and
$\|Q-J\|\leq C\gamma$.  Adding and subtracting $J$ in
\eqref{S-qht:phase-frame-bound} shows that a compiled word
$\widetilde{\mathcal V}_L(t)$ has error
$\delta=C(E(d,L)+\gamma+\zeta)$ relative to $QD_d(t)$ on this frame,
including all output workspace.  No commutation of $G$ and $D_d(t)$ is
needed, and $Q$ is not computed by the circuit.

\subsection{Preparation and clean degree-label erasure}

\begin{proof}[Proof of Theorem~\ref{S-thm:uniform-grid}]
We construct smooth seeds, select their Hermite component, and amplify
before erasing the retained degree label.

\emph{Seeds and sampled tails.}
Szeg\H{o}'s bulk formula \cite[Theorem~8.22.9(a), Eq.~(8.22.12),
p.~201]{szego1975}, in our positive-leading normalization, is
\begin{equation}\label{S-qht:pr-input}
 \varphi_k(\sqrt{2k+1}\cos\theta)
 =\frac{2^{1/4}\sin((k/2+1/4)(\sin2\theta-2\theta)+3\pi/4)}
 {\sqrt\pi\,k^{1/4}\sqrt{\sin\theta}}+O(k^{-5/4}),
\end{equation}
uniformly on fixed compact subintervals of $(0,\pi)$.
Put $a_k=\sqrt{3(2k+1)/4}$ and $w_k=(10\sqrt{2k+1})^{-1}$.
Let $g_k$ equal one on $|x|\leq a_k$, zero on $|x|\geq a_k+w_k$,
and $1-10s^3+15s^4-6s^5$ in between, with $s=(|x|-a_k)/w_k$.
Its support stays in a fixed angular bulk interval, since
$|x|/\sqrt{2k+1}\leq\sqrt3/2+1/10<1$.
For $k\geq k_0$, define $f_k$ as $g_k$ times the explicit sine expression
in \eqref{S-qht:pr-input}; for the finitely many $k<k_0$, use
$f_k=g_k\varphi_k$.  Set $f_k=0$ outside the support and evaluate the
bulk expression only inside it.

The cutoff is $C^2$.  Direct differentiation of the explicit expressions,
not of the asymptotic remainder, gives
\begin{equation}\label{S-qht:seed-bounds}
 \|f_k^{(r)}\|_\infty\leq C(k+1)^{r/2-1/4}\ (0\leq r\leq2),
 \qquad \tfrac12\int(|f_k'|^2+x^2|f_k|^2)\,dx\leq C(k+1).
\end{equation}
Here the support length is $O(\sqrt{k+1})$ and both the phase derivative
and inverse cutoff width are $O(\sqrt{k+1})$.
The moment $\int x^2\varphi_k^2=k+1/2$ implies
$\int_{-a_k}^{a_k}\varphi_k^2\geq1/3$.
The bulk remainder gives $\|g_k(\widetilde f_k-\varphi_k)\|_2\leq C/k$,
where $\widetilde f_k$ is the sine expression.  Choosing $k_0$ large
enough yields $\langle\varphi_k,f_k\rangle\geq b>0$ and
$b\leq\|f_k\|_2\leq C$ for all $k$, including zero.
For a sufficiently large absolute constant $C_s$, set
$B_k=(k+1)^{1/4}f_k/C_s\in[-1/2,1/2]$.
Elementary-function evaluation with angle precision
$O(\sigma/(k+1))$ and guard bits evaluates $B_k$ to error $\sigma$ with reversible
resources polynomial in $\log L+\log(k+1)+\log(1/\sigma)$.
The fixed-degree exceptional branch and the $C^2$ cutoff joins have the
same bound; no growing table is required.

For $N\leq d<L$ and $0\leq k<N$, put
$F_k=\sqrt h\sum_{\ell\in I_L}f_k(h\ell)\ket\ell$,
sampling the entire seed support.
Write $c_j=\langle\varphi_j,f_k\rangle$ and $\alpha=J^*F_k$.
The oscillator quadratic form gives
$\sum_{j\geq d}|c_j|^2\leq C(k+1)/d$.
The ladder relations and
$\|u\|_\infty^2\leq2\|u\|_2\|u'\|_2$ give
$\|\varphi_j^{(r)}\|_\infty\leq C_r(j+1)^{r/2+1/4}$.
Leibniz's rule and full-support trapezoidal quadrature thus yield
\[
 |\alpha_j-c_j|\leq e:=\frac{C\sqrt{N+1}(d+1)^{5/4}}L,
 \qquad \big|\|F_k\|^2-\|f_k\|_2^2\big|\leq\frac{C(N+1)}L.
\]
There is no endpoint term.  The exact identity
$\|(\id-QQ^*)F_k\|^2=\|F_k\|^2-\alpha^*G^{-1}\alpha$,
$\|G^{-1}-\id\|\leq C\gamma$, and the squared-coefficient error
$O(de)$ give
\begin{equation}\label{S-qht:sampled-tail}
 \left\|(\id-QQ^*)\frac{F_k}{\|F_k\|}\right\|^2
 \leq C\left(\frac{N+1}{d}+\frac{(d+1)^{11/4}}L+\gamma\right).
\end{equation}
For the small errors chosen below, $c\leq\|F_k\|\leq C$ and
$b_k:=\langle q_k,F_k/\|F_k\|\rangle\geq c>0$, because
$\langle q_k,F_k\rangle=c_k+O(e+\gamma)$.
All these values and matrices are real, fixing the overlap phase.

\emph{Selection and signed preparation.}
Let $U$ be phase estimation using all $m$ controlled providers
$e^{it_j/2}\widetilde{\mathcal V}_L(t_j)$,
$t_j=2\pi2^j/L$, $j=0,\ldots,m-1$, and the positive-sign Fourier
transform on its phase register.  The scalar is a relative phase on the
control-one branch.  Extend the ideal shifted action
$q_k\mapsto e^{-it_jk}q_k$ by the identity on
$(\operatorname{ran}Q)^\perp$, obtaining an ideal circuit $U_0$.
Character orthogonality gives $U_0(\ket0q_k)=\ket{k}q_k$ since $d<L$.
Its intermediate operations preserve the space $\mathcal M$ with
arbitrary phase register, grid in $\operatorname{ran}Q$, and zero work.
Unitary hybrids, including Fourier synthesis, give
\begin{equation}\label{S-qht:qpe-full-bounds}
 \|(U-U_0)|_{\mathcal M}\|\leq Cm\delta,
 \qquad \|(U^*-U_0^*)|_{\mathcal M}\|\leq Cm\delta.
\end{equation}
For the inverse use $U^*-U_0^*=U^*(U_0-U)U_0^*$ and
$U_0^*\mathcal M=\mathcal M$.  Actual frame invariance is unnecessary.

Retaining $k$, prepare a uniform dyadic interval of $T_k$ grid points
containing the whole seed support plus one mesh step, with
$T_kh=\Theta(\sqrt{k+1})$.  Conservative binary endpoints suffice.
Compute $2\arcsin B_k(h\ell)$, apply the corresponding $R_y$ rotation
to a flag initially in $\ket0$, then apply $X$ and uncompute evaluation.
With $R_y(2\theta)\ket0=\cos\theta\ket0+\sin\theta\ket1$, the
zero-flag amplitude is exactly $B_k(h\ell)$, including its sign.
The fixed slack $|B_k|\leq1/2$ bounds the derivative of $\arcsin$,
so angle approximation and rotation synthesis give full input-isometry
error $O(\sigma)$ with the same polylogarithmic resources.
The selected vector is
$\kappa_kF_k+O(\sigma)$, where
\[
 \kappa_k=\frac{(k+1)^{1/4}}{C_s\sqrt{T_kh}},\qquad
 0<c\leq\kappa_k\leq C.
\]
The error is independent of $T_k$ because the initial interval is
normalized.  Append $U$, preserving the raw failure flags, and subtract
$k$ from the phase register.  Call this compiled unitary $A$ and let
$\Pi_{\rm good}$ require the zero raw flag and zero phase and work
registers.  Suppressing the retained label, put
$z_k=\Pi_{\rm good}A_k\ket0$.
If the tail norm in \eqref{S-qht:sampled-tail} is at most $\eta$, then
\begin{equation}\label{S-qht:good-vector}
 \|z_k-\beta_k(q_k\otimes\ket0)\|\leq
 C(\sigma+\eta+m\delta)=:\eta_g,\qquad
 \beta_k=\kappa_k\|F_k\|b_k\geq a_*>0.
\end{equation}
Indeed, ideal selection extracts exactly the $q_k$ coefficient of the
frame component, and unitarity bounds the tail contribution by its norm.
Raw failure branches cannot enter the good subspace.

\emph{Amplification and erasure.}
We use phase-preserving singular-vector amplification
\cite[Theorems~26--27 and Corollary~10, arXiv v1]{gilyen2018}:
if the nonzero singular values of $\Pi_{\rm good}A\Pi_{\rm in}$ lie
in $[s_*,1]$, a common polynomial gives real overlap at least $1-\rho$
with each normalized good singular vector, using
$O(s_*^{-1}\log(1/\rho))$ calls to $A,A^*$ and projector phases.
This external input fixes phases, not just success probabilities.
Apply it to the actual $A$ and its exact circuit inverse, with
$\Pi_{\rm in}$ selecting $k<N$ and zero input auxiliaries.
For $\eta_g\leq a_*/2$, the actual success amplitude
$s_k=\|z_k\|\geq a_*/2$ and direction $v_k=z_k/s_k$ satisfy
$\|v_k-q_k\otimes\ket0\|\leq4\eta_g/a_*$.
Retained labels make these selected columns orthogonal, with singular
values exactly $s_k$.  Choose $\rho=\varepsilon_{\rm amp}^2/8$:
real overlap $1-\rho$ gives vector error at most $\sqrt{2\rho}$.
Allocate another $\varepsilon_{\rm amp}/2$ to amplification-phase
synthesis.  Preparation of $\ket{k}q_k\ket0$ then has operator error
at most $\varepsilon_{\rm amp}+4\eta_g/a_*$, taking the maximum over
orthogonal label blocks.  Seed and selection errors enter once through
the actual good direction, not once per amplification call.

Apply the exact compiled $U^*$ using the retained degree register as its
phase register.  For arbitrary coefficients $\lambda_k$, its ideal
inverse sends $\sum_{k<N}\lambda_k\ket{k}q_k$ to
$\ket0\sum_{k<N}\lambda_kq_k$.
Equation~\eqref{S-qht:qpe-full-bounds} costs $Cm\delta$; unitarity
preserves the preceding error.  Replacing $Q$ by $J$ costs $C\gamma$.
Thus the full-output error is at most
\begin{equation}\label{S-qht:total-error}
 \varepsilon_{\rm amp}+C(\sigma+\eta+m\delta)/a_*
 +Cm\delta+C\gamma.
\end{equation}

\emph{Parameters and resources.}
Choose $\sigma,\eta,\varepsilon_{\rm amp}$ as sufficiently small fixed
multiples of $\tau$, then take
\[
 d=\lceil C_3(N+1)\tau^{-2}\rceil,\qquad
 L\geq C_4(d+1)^3\tau^{-2},\qquad
 \zeta\leq c_0\tau/(1+m).
\]
Increasing $C_3,C_4$ and decreasing $c_0$ makes
\eqref{S-qht:sampled-tail} at most $\eta^2$ and
\eqref{S-qht:total-error} at most $\tau$.
Indeed, the grid bound ensures full support containment, $d<L$, and
quadrature error $O(\tau^2)$, and also
$E(d,L),\gamma\leq c_0\min\{\tau^2,\tau/(1+m)\}$.
For this last assertion keep $d$ fixed as $L$ increases: half of $-cL$
absorbs $C_2d$, and the other half dominates $L^{1/4}(1+\log L)$ and
the required powers of $1/\tau$ already at the polynomial floor.
These bounds decrease beyond a fixed threshold.  Hence
\eqref{S-qht:grid-floor}, with a sufficiently large absolute constant,
works for every larger dyadic grid.

Each selection uses $m$ providers, and amplification uses
$O(\log(1/\tau))$ calls.  Seed evaluation, label arithmetic,
projector tests, finite-gate synthesis with occurrence-wise allocation,
and exact circuit inverses have polynomial cost in
$\log L+\log(1/\tau)$.  Evaluation is uncomputed; all temporary,
phase and amplification registers have the same polynomial width.
There is no unit-cost state-table oracle.  The estimates include every
auxiliary output and are operator bounds, so they survive tensoring with
an arbitrary reference system.  This proves the clean interface and its
logical resource claims.

The fixed thresholds and classical-input constants remain qualitative.
Circuit-description generation is separate from the online quantum count;
no numerical constant, optimized classical compiler or hardware
realization is asserted.  A polynomially chosen grid gives polynomial
dependence on $\log N+\log(1/\tau)$, while larger grids retain the
explicit $\log L$ dependence.
\end{proof}

\section{Sampling outside the finite Hermite space}\label{S-sec:input-error}
An exact finite unitary need not be an exact continuous transform. Let $P_N$ be the orthogonal projection of $L^2(\mathbb R)$ onto $\hilbert_N$. For a function $\psi$ with well-defined samples, write $r=\psi-P_N\psi$ and let $c_{<N}$ be the coefficient vector of $P_N\psi$. The finite output is
\begin{equation}\label{S-eq:aliasing-error}
 BS_N\psi=c_{<N}+BS_Nr,\qquad
 \|BS_N\psi-c_{<N}\|_2=\|S_Nr\|_2.
\end{equation}
Moreover, interpreting the output as a function in $\hilbert_N$ gives the orthogonal error decomposition
\begin{equation}\label{S-eq:total-function-error}
 \left\|\sum_{k=0}^{N-1}(BS_N\psi)_k\varphi_k-\psi\right\|_{L^2}^2
 =\|S_Nr\|_2^2+\|r\|_{L^2}^2.
\end{equation}
The first term is aliasing from the sampled tail; the second is truncation. These identities follow directly from \eqref{eq:unitary_B} and orthogonal projection. They do not assume that point evaluation is bounded on $L^2$.

Two immediate examples show why a sampling assumption is necessary. Since every $x_i$ is a root of $H_N$, $S_N\varphi_N=0$. Applying the Hermite recurrence \cite[Sec.~18.9]{dlmf} at those roots also gives
\begin{equation}\label{S-eq:first-alias}
 BS_N\varphi_{N+1}=-\sqrt{\frac{N}{N+1}}\,e_{N-1},
\end{equation}
where $e_{N-1}$ is the final number-basis coordinate. Thus a component just outside the cutoff can disappear entirely or alias into a retained component.

There is no constant $C_N$ that bounds $\|S_Nr\|_2$ by $C_N\|r\|_{L^2}$ for every smooth $r\perp\hilbert_N$. To see this, place a smooth bump of width $h$ and height $h^{-1/2}$ at one quadrature node, with fixed nonzero central value after rescaling. Its $L^2$ norm is fixed and its sampled value diverges. Subtract its projection onto the fixed finite space. Each projection coefficient is $O(h^{1/2})$, so the subtraction neither cancels the diverging sample nor makes the residual norm diverge. This is the usual unboundedness of point evaluation on $L^2$, specialized to the finite sampling operator.

A stronger input assumption does provide a direct bound. If $r\in H^1(\mathbb R)$, then its continuous representative satisfies $\|r\|_\infty^2\leq\|r\|_{L^2}\|r'\|_{L^2}$. One proof integrates $(|r|^2)'$ from both sides of a point and applies Cauchy--Schwarz. Consequently, with $W_N=\sum_iw_i e^{x_i^2}$,
\begin{equation}\label{S-eq:sobolev-sampling}
 \|S_Nr\|_2^2\leq W_N\|r\|_{L^2}\|r'\|_{L^2}.
\end{equation}
For the oscillator $H_{\mathrm{osc}}=(-d^2/dx^2+x^2)/2$, its quadratic form is
\[
 q_{\mathrm{osc}}[r]=\tfrac12\bigl(\|r'\|_{L^2}^2+\|xr\|_{L^2}^2\bigr),
 \qquad r\in H^1(\mathbb R),\quad xr\in L^2(\mathbb R).
\]
Thus $\|r'\|_{L^2}^2\leq2q_{\mathrm{osc}}[r]$ on the form domain. On the smaller operator domain, $q_{\mathrm{osc}}[r]=\langle r,H_{\mathrm{osc}}r\rangle$ as well; the sampling bound requires only the form-domain assumption.
Equations \eqref{S-eq:total-function-error} and \eqref{S-eq:sobolev-sampling} give a fully specified, although not asserted sharp, sufficient error bound. Determining sharper $N$-dependence under an appropriate Hermite regularity assumption is a further approximation-theory question.

For an actual quantum input one must normalize $S_N\psi$. If $a=\|c_{<N}\|_2>0$ and $e=\|S_Nr\|_2<a$, then the normalized finite output and the normalized projected coefficient state satisfy
\begin{equation}\label{S-eq:normalized-error}
 \left\|\frac{BS_N\psi}{\|S_N\psi\|_2}-\frac{c_{<N}}a\right\|_2\leq\frac{2e}a.
\end{equation}
This follows by adding and subtracting $BS_N\psi/a$ and using the reverse triangle inequality. A circuit approximation of operator-norm error $\varepsilon$ adds at most $\varepsilon$ to this state-vector error. If $S_N\psi=0$, no normalized sampled input exists.

For the oscillator application, evolution and measurement already specified in the number basis can be performed directly in that basis. The scalar $e^{-it/2}$ in the position-basis evolution becomes a relative control phase in a controlled implementation. More general multiplication potentials require a separate approximation: replacing a potential by its values at the quadrature nodes is a discrete-variable-representation approximation, whose error must be controlled independently of the exact finite transform~\cite[Appendix~A, Eqs.~(42)--(45)]{plis2025}.
A direct composition check can form $d=B^\dagger c$ for a normalized coefficient vector and compare $B^\dagger D(t)Bd$ with $B^\dagger D(t)c$. This tests the conventions and composition; compiled gate counts require the separate resource analysis.

\section{A Mode-Population Noise Diagnostic}
\label{S-app:noise-diagnostic}
\label{S-sec:noise_benchmarking}
A transform--idle--inverse experiment can show how a qubit noise channel redistributes Hermite-mode populations. The following protocol describes a possible diagnostic. The role of $B$ is to choose the preparation and measurement bases, rather than to establish a noise-characterization advantage. In an ideal-transform model, noise is confined to the idle interval by assumption; a hardware implementation would also require characterization of state preparation, transform errors, and readout.

For example, consider a 4-qubit register ($N=16$), initially in the number state $\ket{12}$. With the convention~\eqref{eq:B_eq}, $B^\dagger$ maps this state to weighted position samples; after an idle interval $\Delta t$, $B$ returns it to the number basis for measurement. If $\Phi_{\Delta t}$ is the noise channel on the sample register, the measured state is
{\begin{equation}\label{S-eq:noise-mode-channel}
 \rho_{\rm out}
 =B\,\Phi_{\Delta t}\bigl(B^\dagger\rho_{\rm in}B\bigr)B^\dagger,
 \qquad \rho_{\rm in}=\ket{12}\bra{12}.
\end{equation}}
Figure~\ref{S-fig:benchmark_circuit} gives this protocol.

\begin{figure}[ht]
    \centering
    \begin{quantikz}
        \lstick{$\ket{0}$} & \gate[wires=4]{\text{Init } \ket{12}} & \gate[wires=4]{B^\dagger} & \gate{\text{id}} & \gate[wires=4]{B} & \meter{} \\
        \lstick{$\ket{0}$} & & & \gate{\text{id}} & & \meter{} \\
        \lstick{$\ket{0}$} & & & \gate{\text{id}} & & \meter{} \\
        \lstick{$\ket{0}$} & & & \gate{\text{id}} & & \meter{}
    \end{quantikz}
    \caption{Transform--idle--inverse protocol. The number state is mapped to weighted position samples by $B^\dagger$, subjected to noisy idle (id) operations for time $\Delta t$, and mapped back by $B$ for number-basis measurement.}
    \label{S-fig:benchmark_circuit}
\end{figure}

The observable populations are $p_k(\Delta t)=\bra{k}\rho_{\rm out}\ket{k}$, whose sum is one for a trace-preserving channel. To illustrate this protocol, take $\Phi_t=\phi_t^{\otimes4}$, where each qubit undergoes thermal relaxation with $T_1=50\,\mu\mathrm{s}$, coherence-decay time $T_2=30\,\mu\mathrm{s}$, and equilibrium excited-state population $p_{\rm eq}=0.05$. For a single-qubit density matrix $\sigma$, the channel is specified by
{\begin{equation}\label{S-eq:noise-single-qubit}
 \begin{aligned}
  (\phi_t(\sigma))_{11}
    &=p_{\rm eq}+(\sigma_{11}-p_{\rm eq})e^{-t/T_1},\\
  (\phi_t(\sigma))_{01}&=\sigma_{01}e^{-t/T_2},
 \end{aligned}
\end{equation}}
with the remaining entries fixed by trace one and Hermiticity. We use the binary ordering $\ket{q_3q_2q_1q_0}$, with $q_0$ least significant. Figure~\ref{S-fig:noise_benchmark} is a direct density-matrix calculation of~\eqref{S-eq:noise-mode-channel} at integer-microsecond intervals, using ideal dense transforms. Qubit relaxation in this encoding need not relax the oscillator to its ground state: even a complete reset of the sample register gives $B\ket{0}$ in the number basis, rather than $\ket{0}$. Here the limiting state is $B\diag(1-p_{\rm eq},p_{\rm eq})^{\otimes4}B^\dagger$.

The accompanying reproducibility package contains the simulator, the full population table, and numerical consistency checks.

\begin{figure}[ht]
    \centering
    \includegraphics[width=0.8\linewidth,alt={Heat map of sixteen Hermite-mode populations versus idle time. Noise redistributes the initial mode twelve population; each row sums to one.}]{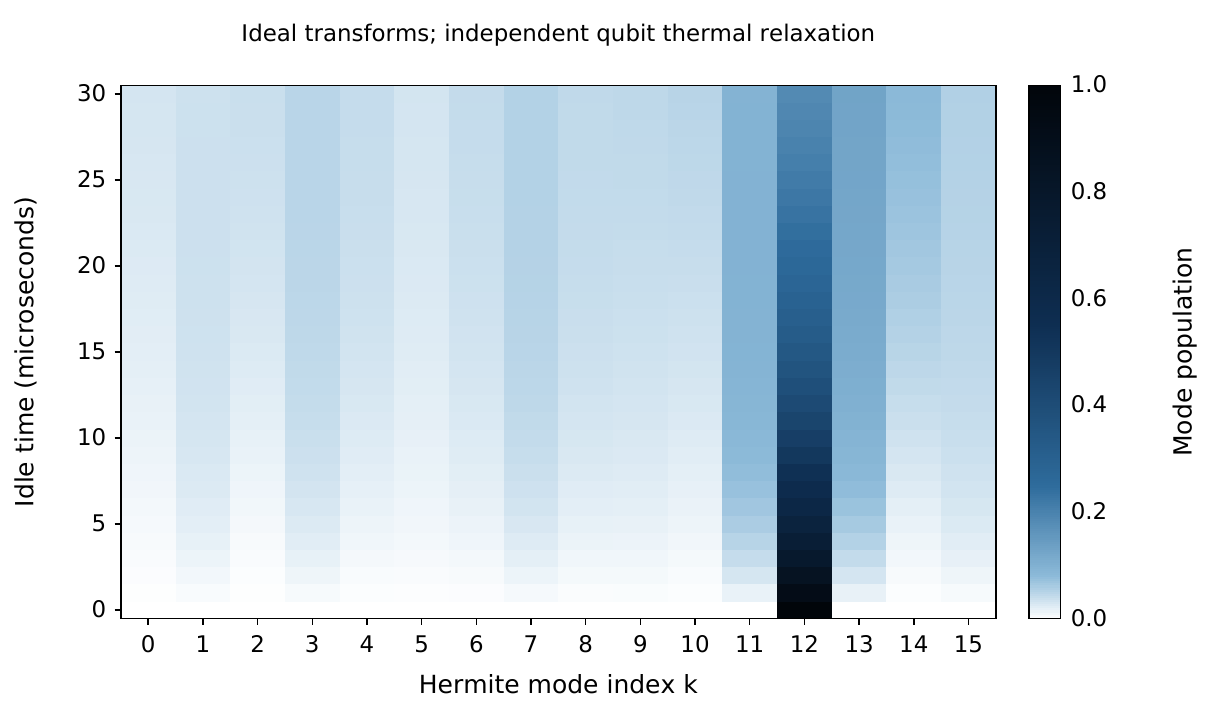}
    \caption{Simulated mode populations for the transform--idle--inverse protocol with $N=16$, initial mode $12$, and independent identical qubit channels~\eqref{S-eq:noise-single-qubit}. The parameters are $T_1=50\,\mu\mathrm{s}$, $T_2=30\,\mu\mathrm{s}$ and $p_{\rm eq}=0.05$. Each row gives all 16 populations at one idle time. The calculation uses ideal dense transforms and density matrices, without shot sampling, gate compilation, or hardware execution.}
    \label{S-fig:noise_benchmark}
\end{figure}

\subsection{Population and coherence tests}
Population measurements determine only part of a noise channel. The distinction is already visible for bosonic pure loss, whose single lowering operator gives the generator $\kappa\mathcal D[a_N]$ on the retained number states~\cite[Sec.~I.A, preceding Eq.~(1.3)]{albert2018}. Here $\kappa\geq0$ is the loss rate and
\[
 a_N=\sum_{k=1}^{N-1}\sqrt{k}\ket{k-1}\bra{k},\qquad
 \mathcal D[L](\rho)=L\rho L^\dagger-\tfrac12\{L^\dagger L,\rho\}.
\]
Its infinitesimal population transition $k\to k-1$ has rate $\kappa k$. Independent qubit relaxation instead has local lowering operators: in binary encoding, decay of bit $j$ changes the encoded integer by $2^j$, at a device-dependent rate. A basis change conjugates the channel as in~\eqref{S-eq:noise-mode-channel}; a model of encoded qubit noise and a model of bosonic loss therefore require distinct physical assumptions.

To see why populations alone are insufficient, put $E_{k\ell}=\ket{k}\bra{\ell}$ and compare
\[
 \mathcal L_1=\kappa\mathcal D[a_N],\qquad
 \mathcal L_2=\kappa\sum_{k=1}^{N-1}k\mathcal D[E_{k-1,k}].
\]
Both preserve number-diagonal density matrices and act identically on them, so every number-state input has the same population evolution under both generators at every time. For distinct $1\leq k,\ell<N$, however,
\begin{equation}\label{S-eq:coherence-witness}
 (\mathcal L_1-\mathcal L_2)(E_{k\ell})
 =\kappa\sqrt{k\ell}\,E_{k-1,\ell-1}.
\end{equation}
The anticommutator terms agree, while the collective jump transports off-diagonal coherence. For $N\geq3$ and $\kappa>0$, the state $(\ket1+\ket2)/\sqrt2$ distinguishes the two generators. They coincide for $N\leq2$ or $\kappa=0$.

Number-state inputs probe population leakage, while superposition inputs with phase-sensitive readout can probe coherence transport. A zero-idle round trip supplies a baseline for the combined preparation, transform, and readout errors. These are proposed extensions of the diagnostic; the population simulation above provides neither a coherence measurement nor a device calibration.

\clearpage

\end{document}